\algrenewcommand\algorithmicindent{1.0em}%
\newcommand\sse{\subseteq}
\newcommand\Sym[1]{\ensuremath{\mathrm{Sym}_{#1}}}
\newcommand\set[1]{\ensuremath{\{#1\}}}
\newcommand\condset[2]{\set{#1 \;|\; #2}}
\newcommand\NP{\ensuremath{\mathsf{NP}}}
\newcommand\coNP{\ensuremath{\mathsf{coNP}}}
\newcommand\FF{\ensuremath{\mathcal{F}}}
\newcommand\poly[1]{\ensuremath{\mathrm{poly}(#1)}}
\newcommand{\isom}{\cong}
\newcommand{\puz}[2]{$(#1, #2)$-puzzle\xspace}
\newcommand{\puzs}[2]{$(#1, #2)$-puzzles\xspace}
\newcommand{\susp}[2]{$(#1, #2)$-SUSP\xspace}
\newcommand{\susps}[2]{$(#1, #2)$-SUSPs\xspace}
\date{}
\title{Matrix Multiplication:\\ Verifying Strong Uniquely Solvable
  Puzzles\thanks{An extended abstract of this paper appeared in the
    Proceedings of SAT 2020 \cite{ajx20}.}}
\author{
  Matthew Anderson
  \and%
  Zongliang Ji
  \and%
  Anthony Yang Xu
}
\institute{Department of Computer Science \\ Union College
  \\ Schenectady, New York, USA \\ \email{\{andersm2, jiz,
    xua\}@union.edu}}
\begin{document}
\maketitle
\begin{abstract}
Cohn and Umans proposed a framework for developing fast matrix
multiplication algorithms based on the embedding computation in
certain groups algebras \cite{cu03}.  In subsequent work with
Kleinberg and Szegedy, they connected this to the search for
combinatorial objects called strong uniquely solvable puzzles (strong
USPs) \cite{cksu05}.  We begin a systematic computer-aided search for
these objects.  We develop and implement constraint-based algorithms
build on reductions to $\mathrm{SAT}$ and $\mathrm{IP}$ to verify that
puzzles are strong USPs, and to search for large strong USPs.  We
produce tight bounds on the maximum size of a strong USP for width $k
\le 5$, construct puzzles of small width that are larger than previous
work, and improve the upper bounds on strong USP size for $k \le 12$.
Although our work only deals with puzzles of small-constant width, the
strong USPs we find imply matrix multiplication algorithms that run in
$O(n^\omega)$ time with exponent $\omega \le 2.66$.  While our
algorithms do not beat the fastest algorithms, our work provides
evidence and, perhaps, a path to finding families of strong USPs that
imply matrix multiplication algorithms that are more efficient than
those currently known.

\keywords{matrix multiplication \and strong uniquely solvable puzzle
  \and arithmetic complexity \and integer programming \and
  satisfiability \and satisfiability benchmark \and upper bounds \and
  reduction \and application}
\end{abstract}


\section{Introduction}
\label{sec:intro}

An optimal algorithm for matrix multiplication remains elusive despite
substantial effort.  We focus on the square variant of the matrix
multiplication problem, i.e., given two $n$-by-$n$ matrices $A$ and
$B$ over a field $\FF$, the goal is to compute the matrix product $C =
A \times B$.  The outstanding open question is: How many field
operations are required to compute $C$?  The long thought-optimal
na\"{i}ve algorithm based on the definition of matrix product is
$O(n^3)$ time.  The groundbreaking work of Strassen showed that it can
be done in time $O(n^{2.808})$ \cite{str69} using a divide-and-conquer
approach.  A long sequence of work concluding with Coppersmith and
Winograd's algorithm (CW) reduced the running time to $O(n^{2.376})$
\cite{pan78,sch81,str86,cw90}. Recent computer-aided refinements of CW
by others reduced the exponent to $\omega \le 2.3728639$
\cite{ds13,wil12,le14}.

\subsubsection{Approach}
Cohn and Umans \cite{cu03} introduced a framework for developing
faster algorithms for matrix multiplication by reducing this to a
search for groups with subsets that satisfy an algebraic property
called the \emph{triple-product property}, which allows matrix
multiplication to be embedded in the group algebra.  Their approach
takes inspiration from the $O(n \log n)$ algorithm for multiplying
degree-$n$ univariate polynomials by embedding into the group algebra
of the fast Fourier transform, c.f., e.g., \cite[Chapter 30]{clrs}.
Subsequent work \cite{cksu05} elaborated on this idea and developed
the notion of combinatorial objects called \emph{strong uniquely
  solvable puzzles} (strong USPs).  These objects imply a group
algebra embedding for matrix multiplication, and hence give a matrix
multiplication algorithm as well.
\begin{figure}[t]
\centering
  \includegraphics[width=\linewidth]{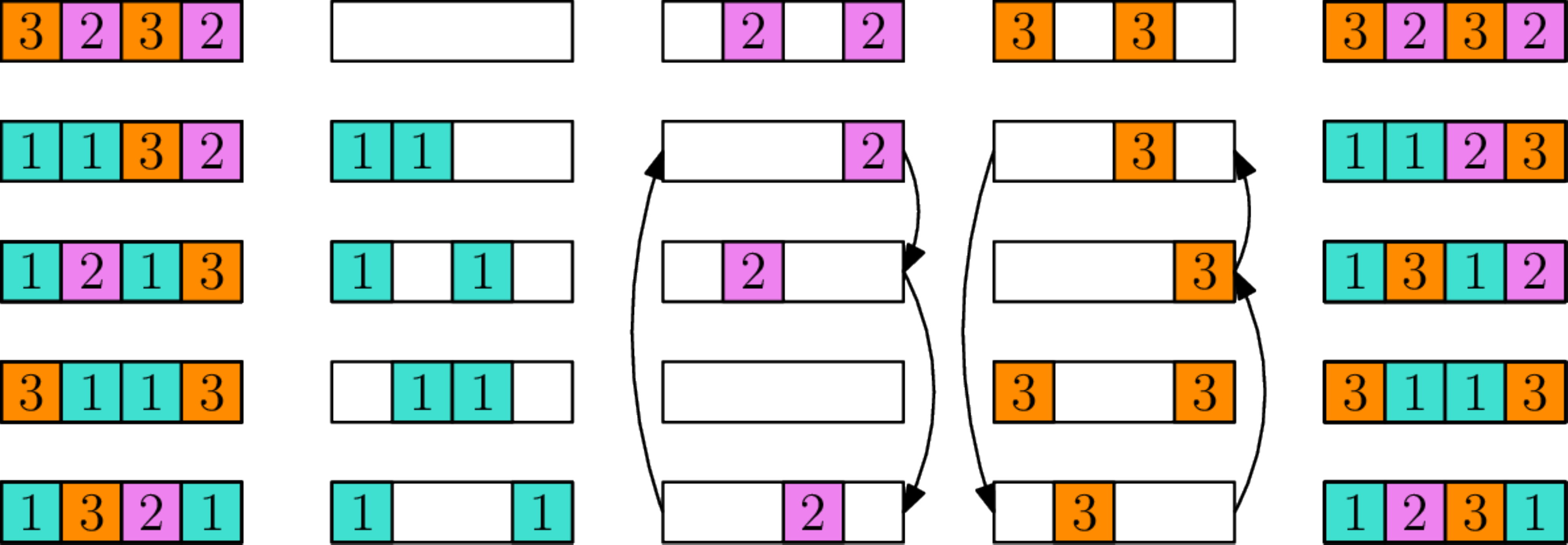}
\caption{The leftmost diagram is a width-4 size-5 puzzle $P$.  The
  middle three diagrams are the three sets of subrows of $P$. The
  rightmost diagram is the puzzle $P'$ resulting from reordering the
  subrows of $P$ as indicated by the arrows and then recombining them.
  Since $P$ can be rearranged as $P' \neq P$ without overlap, $P$ is
  not uniquely solvable.}\label{fig:puzzle}
\end{figure}

A \emph{width}-$k$ puzzle $P$ is a subset of $\set{1,2,3}^k$, and the
cardinality of $P$ is the puzzle's \emph{size}.  Each element of $P$
is called a \emph{row} of $P$, and each row consists of three
\emph{subrows} that are elements of $\set{1,*}^k$, $\set{2,*}^k$,
$\set{3,*}^k$ respectively.  Informally, a puzzle $P$ is a
\emph{uniquely solvable puzzle} (USP) if there is no way to permute
the subrows of $P$ to form a distinct puzzle $P'$ without cells with
numbers overlapping.  \autoref{fig:puzzle} demonstrates a puzzle
that is not a USP.  A uniquely solvable puzzle is \emph{strong} if a
tighter condition for non-overlapping holds (see
\autoref{def:strong-USP}). For a fixed width $k$, the larger
the size of a strong USP, the faster matrix multiplication algorithm
it gives \cite{cksu05}.  In fact, Cohn et al.~show that there exist an
infinite family of strong USPs that achieves $\omega < 2.48$.

We follow Cohn et al.'s program by developing: (i)
\textbf{verification algorithms} and heuristics to determine whether a
puzzle is a strong USP, (ii) \textbf{search algorithms} to find large
strong USPs, (iii) \textbf{practical implementations}\footnote{Source
  code available here: \url{https://bitbucket.org/paraphase/matmult}}
of these algorithms, and (iv) new \textbf{upper bounds} on the size of
strong USPs.  The most successful of our verification algorithms work
by reducing the problem through 3D matching to the satisfiability
($\mathrm{SAT}$) and integer programming ($\mathrm{IP}$) problems that
are then solved with existing tools.  The algorithms we develop are
not efficient---they run in worst-case exponential time in the natural
parameters.  However, the goal is to find a sufficiently large strong
USP that would provide a faster matrix multiplication algorithm, and
the resulting algorithm's running time is independent of the running
time of our algorithms.  The inefficiency of our algorithms limit the
search space that we can feasibly examine.

\subsubsection{Results}
Our theoretical results and implementation produces new bounds on the
size of the largest strong USP for small-width puzzles.  For
small-constant width, $k \le 12$, we beat the largest sizes of
\cite[Proposition 3.8]{cksu05}.  Our lower bounds on maximum size are
witnessed by strong USPs we found via search.  For $k \le 5$ we give
tight upper bounds determined by exhaustively searching all puzzles
after modding out common symmetries.  For $k \le 12$, we improve the
upper bounds on the size of strong USPs.  Although our current results
do not beat \cite{cksu05} for unbounded $k$, they give evidence that
there may exist families of strong USPs that give matrix
multiplication algorithms that are more efficient than those currently
known.  The best strong USP we can produce imply matrix multiplication
algorithms with $\omega \le 2.66$.

We also create a benchmark data set of SAT/UNSAT instances based on
our reductions from strong-USP verification and examine the
performance of solvers from the 2021 SAT Competition \cite{SAT2021}.

\subsubsection{Related Work}
For background on algorithms matrix multiplication problem, c.f, e.g.,
\cite{gs005}.  There are also a number of negative results known.
Na\"{i}vely, the dimensions of the output matrix $C$ implies that the
problem requires at least $\Omega(n^2)$ time.  Slightly better lower
bounds are known in general and also for specialized models of
computation, c.f., e.g., \cite{shp03,kam05}.  There are also lower
bounds known for a variety of algorithmic approaches to matrix
multiplication.  Ambainis et al.~showed that the laser method cannot
alone achieve an algorithm with $\omega \le 2.3078$ \cite{afl15}.  A
recent breakthrough on arithmetic progressions in cap sets
\cite{clp17} combined with a conditional result on the
Erd\"{o}s-Szemeredi sunflower conjecture \cite{asu13} imply that Cohn
et al.'s strong USP approach cannot achieve $\omega = 2 + \epsilon$
for some $\epsilon > 0$ \cite{bccgu16}.  Subsequent work has
generalized this barrier \cite{avw18a,avw18b} to a larger class of
algorithmic techniques.  Despite this, we are unaware of a concrete
lower bound on $\epsilon$ implied by these negative results.  There
remains a substantial gap in our understanding between what has been
achieved by the positive refinements of LeGall, Williams, and
Stothers, and the impossibility of showing $\omega = 2$ using the
strong USP approach.

Recently Fawzi et al.~showed how reinforcement learning techniques can
be used to develop new matrix multiplication algorithms
\cite{fawzi2022discovering}.  Their work produces matrix
multiplication algorithms with $\omega < 2.77$, which is faster than
Strassen's original algorithm ($\omega < 2.81$), but far from the
refinements of Coppersmith-Winograd ($\omega < 2.372$) or the results
achieved in this work.

\subsubsection{Organization}
\autoref{sec:prelim} begins with the formal definition of a strong USP
and the Cohn-Umans framework.  Sections~\ref{sec:verify} \&
\ref{sec:search}, respectively, discuss our algorithms and heuristics
for verifying that and searching for a puzzle that is a strong USP.
\autoref{sec:upper-bounds} describes several upper bounds on the size
of strong USPs. Sections~\ref{sec:implementation} \& \ref{sec:results}
discuss our implementation and experimental results.

\section{Preliminaries}
\label{sec:prelim}

\newcommand\ordset[1]{\ensuremath{[#1]}}

For an integer $k$, we use $\ordset{k}$ to denote the set
$\set{1,2, \ldots, k}$.  For a set $Q$, $\Sym{Q}$ denotes the
symmetric group on the elements of $Q$, i.e., the group of
permutations acting on $Q$. Cohn et al.~introduced the idea of a
\emph{puzzle} \cite{cksu05}.
\begin{definition}[Puzzle]
  \label{def:puzzle}
  For $s, k \in \mathbb{N}$, an $(s,k)$-\emph{puzzle} is a subset $P \sse
  \ordset{3}^k$ with $|P| = s$.  We call $s$ the \emph{size} of $P$,
  and $k$ the \emph{width} of $P$.
\end{definition}
\noindent We say that an $(s,k)$-puzzle has $s$ rows and $k$ columns.
The columns of a puzzle are inherently ordered and indexed by
$[k]$. The rows of a puzzle have no inherent ordering, however, it is
often convenient to assume that they are ordered and indexed by the
set of natural numbers $[s]$.  

Cohn et al.~establish a particular combinatorial property of puzzles
that allows one to derive group algebras that matrix multiplication
can be efficiently embedded into.  Such puzzles are called
\emph{strong uniquely solvable puzzles}.  However, to give some
intuition we first explain a simpler version of the property called
\emph{uniquely solvable puzzles}.

\begin{definition}[Uniquely Solvable Puzzle (USP)]
  \label{def:USP}
  An \puz{s}{k} $P$ is \emph{uniquely solvable} if for all $\pi_1,
  \pi_2, \pi_3 \in \Sym{P}$: Either (i) $\pi_1 = \pi_2 = \pi_3$, or
  (ii) there exists $r \in P$ and $c \in \ordset{k}$ such that at
  least two of the following hold: $(\pi_1(r))_c = 1$, $(\pi_2(r))_c =
  2$, $(\pi_3(r))_c = 3$.
\end{definition}

Informally, a puzzle is \textbf{not} uniquely solvable if each row of
the puzzle can be broken into ones, twos, and threes pieces and then
the rows can be reassembled in a different way so that each new row is
a combination of a ones, a twos, and a threes piece where there is
exactly one element of $\ordset{3}$ for each column.  Observe that
uniquely solvable puzzles can have at most $2^k$ rows because each
ones piece, twos piece, and threes piece must be unique, as otherwise
the duplicate pieces can be swapped making the puzzle not uniquely
solvable.

The definition of \emph{strong} uniquely solvable puzzle is below, it
is nearly the same except that it requires that there be a collision
on a column between exactly two pieces, not two or more pieces like in
the original definition.
\begin{definition}[Strong USP (SUSP)]
  \label{def:strong-USP}
  An \puz{s}{k} $P$ is \emph{strong uniquely solvable} if for all
  $\pi_1, \pi_2, \pi_3 \in \Sym{P}$: Either (i) $\pi_1 = \pi_2 = \pi_3$,
  or (ii) there exists $r \in P$ and $c \in \ordset{k}$ such that exactly
  two of the following hold: $(\pi_1(r))_c = 1$, $(\pi_2(r))_c = 2$,
  $(\pi_3(r))_c = 3$.
\end{definition}
Finally, Cohn et al.~defined a strengthening of SUSP which requires
that every triple of rows witness the necessary overlap.
\begin{definition}[Local SUSP]
\label{def:local-strong-USP}
A local strong uniquely solvable puzzle is an \puz{s}{k} where for
each triple of rows $u, v, w \in P$ with $u, v, w$ not all equal, there
exists $c \in [k]$ such that $(u_c, v_c, w_c)$ is an element
of $$\mathcal{L} = \set{(1, 2, 1), (1, 2, 2), (1, 1, 3), (1, 3, 3),
  (2, 2, 3), (3, 2, 3)}.$$
\end{definition}
Every SUSP $P$ corresponds to a much larger local SUSP $P'$, which,
informally, is the result of concatenating and duplicating the rows of
$P$ to explicitly demonstrate the $\forall \pi_1, \pi_2, \pi_3$ part
of \autoref{def:strong-USP}.
\begin{proposition}[{\cite[Proposition 6.3]{cksu05}}]
  \label{prop:susp-to-local}
  Let $P$ be a \susp{s}{k}, then there is a local \susp{s!}{s \cdot k} $P'$.
\end{proposition}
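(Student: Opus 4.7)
The plan is to construct $P'$ by concatenating, for each permutation of $P$, the rows of $P$ in the order given by that permutation. More precisely, fix an arbitrary enumeration $r_1, r_2, \ldots, r_s$ of the rows of $P$. For each $\pi \in \Sym{P}$, define a row of $P'$ by
\[
R_\pi = \pi(r_1) \cdot \pi(r_2) \cdots \pi(r_s) \in [3]^{sk},
\]
where $\cdot$ denotes concatenation. Let $P' = \{R_\pi : \pi \in \Sym{P}\}$. Since distinct permutations disagree on some $r_i$ and $P$ has distinct rows, the map $\pi \mapsto R_\pi$ is injective, so $|P'| = s!$ and $P'$ has width $s \cdot k$, as required.

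It remains to verify the local SUSP property. Given three rows $R_{\pi_1}, R_{\pi_2}, R_{\pi_3}$ of $P'$ that are not all equal, the corresponding permutations $\pi_1, \pi_2, \pi_3$ are not all equal. Applying \autoref{def:strong-USP} to these three permutations yields some $r \in P$ and $c \in [k]$ such that exactly two of $(\pi_1(r))_c = 1$, $(\pi_2(r))_c = 2$, $(\pi_3(r))_c = 3$ hold. Write $r = r_i$ and let $c' = (i-1)k + c$ be the corresponding column index of $P'$. Then the triple $((R_{\pi_1})_{c'}, (R_{\pi_2})_{c'}, (R_{\pi_3})_{c'}) = ((\pi_1(r))_c, (\pi_2(r))_c, (\pi_3(r))_c)$ is, by definition, the value at column $c'$ of the three rows of $P'$.

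The final step is a case analysis to show that any triple in $[3]^3$ satisfying exactly two of the three equalities lies in $\mathcal{L}$. If the two that hold are the $1$ and $2$ equalities, the third coordinate is in $\{1,2\}$, giving $(1,2,1)$ or $(1,2,2)$; if the $1$ and $3$ equalities hold, the second coordinate is in $\{1,3\}$, giving $(1,1,3)$ or $(1,3,3)$; and if the $2$ and $3$ equalities hold, the first coordinate is in $\{2,3\}$, giving $(2,2,3)$ or $(3,2,3)$. These are exactly the six elements of $\mathcal{L}$, completing the proof.

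I do not expect a serious obstacle here: once the concatenation-over-$\Sym{P}$ construction is written down, the local property is precisely the pointwise reformulation of \autoref{def:strong-USP} for a single triple of permutations, and the set $\mathcal{L}$ is literally the enumeration of triples in $[3]^3$ satisfying ``exactly two of the three diagonal equalities.'' The only thing to be careful about is bookkeeping the identification between columns $(i,c)$ of $P'$ and the pair (row index $i$ in $P$, column $c$ in $P$), and confirming that $R_\pi \mapsto \pi$ is a bijection so that the size is exactly $s!$ rather than just at most $s!$.
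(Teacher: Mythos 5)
Your construction and argument are correct, and they coincide with the approach the paper intends: the paper itself gives no proof of this proposition (it cites \cite[Proposition 6.3]{cksu05}), but its informal description---concatenating and duplicating the rows of $P$ to explicitly witness the $\forall \pi_1,\pi_2,\pi_3$ quantifier of \autoref{def:strong-USP}---is exactly your $R_\pi$ construction indexed by $\Sym{P}$. Your verification that $\mathcal{L}$ is precisely the set of triples satisfying exactly two of the three diagonal equalities, together with the injectivity of $\pi \mapsto R_\pi$, fills in all the details that the citation leaves implicit.
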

Note that in all of the definitions, local, strong, uniquely solvability
is invariant to the ordering of the rows of the puzzle, because $P$ is
a set---we use this fact implicitly.

Cohn et al.~show the following connection between the existence of
strong USPs and upper bounds on the exponent of matrix multiplication
$\omega$.
\begin{lemma}[{\cite[Corollary 3.6]{cksu05}}]
  \label{lem:omega}
  Let $\epsilon > 0$, if there is a strong uniquely solvable
  $(s,k)$-puzzle, there is an algorithm for multiplying $n$-by-$n$
  matrices in time $O(n^{\omega+\epsilon})$ where
  $$\omega \le \min_{m \in \mathbb{N}_{\ge 3}} \left( \frac{3 \log
    m}{\log(m-1)} - \frac{3 \log s!}{s\cdot k \log(m-1)}\right).$$
\end{lemma}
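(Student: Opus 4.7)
The plan is to retrace Cohn--Kleinberg--Szegedy--Umans's argument: turn the strong USP into a family of many simultaneous independent matrix multiplications inside a carefully chosen group algebra, then invoke Sch\"onhage's asymptotic sum inequality (the $\tau$-theorem) to bound $\omega$.

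First I would apply \autoref{prop:susp-to-local} to promote the given \susp{s}{k} $P$ to a local \susp{s!}{sk} $P'$. This is purely for convenience: the local condition is a \emph{per-triple} combinatorial witness and sidesteps the quantification over $\Sym{P}$ in \autoref{def:strong-USP}, making it much easier to verify the simultaneous triple-product property that the Cohn--Umans embedding demands.

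Second, fix $m \in \mathbb{N}_{\ge 3}$ and work inside the abelian group $G = \mathbb{Z}_m^{sk}$. For each row $r \in P'$, construct three subsets $U_r, V_r, W_r \subseteq G$ coordinatewise, where the letter $r_j \in \{1,2,3\}$ determines which of $U_r, V_r, W_r$ receives a distinguished singleton in column $j$ and which two receive a complementary $(m{-}1)$-element set. Using the alphabet $\mathcal{L}$ from \autoref{def:local-strong-USP}, one checks that the family $\{(U_r, V_r, W_r)\}_{r \in P'}$ satisfies the simultaneous triple-product property, so they realize $s!$ disjoint matrix-multiplication tensors---one per row of $P'$---embedded in the group algebra $\mathbb{C}[G]$, whose total rank is at most $|G| = m^{sk}$.

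Finally, the volume $a_r b_r c_r$ of each row's $\langle a_r, b_r, c_r \rangle$ tensor evaluates to a common power of $m-1$ depending only on $sk$. Plugging $p = s!$ copies of these products and the rank bound $m^{sk}$ into the asymptotic sum inequality yields an estimate of the form $s! \cdot (m-1)^{sk\omega/3} \le m^{sk}$; taking logarithms, dividing through by $sk \log(m-1)$, and minimizing over $m \ge 3$ gives the bound in the statement. The main obstacle is the second step: carefully verifying that the coordinatewise family inherits the simultaneous triple-product property from local strong solvability. The ``exactly two'' collision requirement of \autoref{def:strong-USP} (as opposed to ``at least two'' in \autoref{def:USP}) is precisely calibrated so that the resulting group-algebra products are genuine matrix-multiplication tensors rather than weaker relatives, and reconciling the alphabet $\mathcal{L}$ with the group identities in $G$ is the combinatorial heart of the Cohn et al.\ proof.
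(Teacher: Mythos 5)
You should first note that the paper does not prove this lemma at all: it is imported verbatim as \cite[Corollary~3.6]{cksu05}, so there is no in-paper argument to compare against. Your sketch follows the abelian reformulation of Cohn et al.\ (local SUSPs and the simultaneous triple product property, their Sections~5--7) rather than their original derivation of Corollary~3.6, which constructs triple-product-property subsets of the nonabelian group $\mathrm{Cyc}_m^{sk}\rtimes \Sym{[s]}$ and then bounds character degrees by $s!$; both routes produce exactly the inequality $s!\,(m-1)^{sk\omega/3}\le m^{sk}$ and hence the stated bound, so the plan itself is sound, and invoking \autoref{prop:susp-to-local} plus Schönhage's asymptotic sum inequality (with total rank bounded by $|\mathbb{C}[\mathbb{Z}_m^{sk}]|=m^{sk}$) is a perfectly good way to finish.

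The genuine flaw is in your coordinate gadget, which is inverted. You assign, in column $j$, a singleton to the set named by the letter $r_j$ and an $(m-1)$-element set to the other two. With that assignment each row's tensor has volume $(m-1)^{2sk}$, not the $(m-1)^{sk}$ your final inequality uses, and worse, the triple product property fails already for a single row: in a column with $r_j=1$ the constraint becomes $(b_j-b_j')+(c_j-c_j')=0$ with all four entries ranging over $\mathbb{Z}_m\setminus\{0\}$, which does not force $b_j=b_j'$ and $c_j=c_j'$ (take $b_j=1$, $b_j'=2$, $c_j=2$, $c_j'=1$). Indeed triples of volume $(m-1)^{2sk}$ cannot even be packed into a group of order $m^{sk}$ once $m\ge 4$, since $\omega\ge 2$ would force $(m-1)^{4sk/3}\le m^{sk}$. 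The correct gadget is the opposite one: the letter $r_j$ determines which \emph{single} set is free in column $j$ (it receives $\mathbb{Z}_m\setminus\{0\}$, of size $m-1$) while the other two are pinned to $\{0\}$. Then every row has volume exactly $(m-1)^{sk}$, the set $\mathcal{L}$ of \autoref{def:local-strong-USP} is precisely the list of column types for which the resulting coordinate equation has no nontrivial solution (this is where local strong solvability enters), and the asymptotic sum inequality yields $s!\,(m-1)^{sk\omega/3}\le m^{sk}$, which after taking logarithms and minimizing over $m\in\mathbb{N}_{\ge 3}$ is the claimed bound.
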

\noindent This result motivates the search for large strong USPs that
would result in faster algorithms for matrix multiplication.  In the
same article, the authors also demonstrate the existence of an
infinite family of strong uniquely solvable puzzles, for width $k$
divisible by three, that achieves a non-trivial bound on $\omega$.
\begin{lemma}[{\cite[Proposition 3.8]{cksu05}}]
  \label{lem:family-exists}
  There is an infinite family of strong uniquely solvable puzzles that
  achieves $\omega < 2.48$.
\end{lemma}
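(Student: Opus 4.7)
The plan is to exhibit an explicit infinite family of strong USPs indexed by width $k$ (divisible by $3$), with size $s(k)$ large enough that plugging into Lemma \ref{lem:omega} and optimizing over $m$ yields the bound $\omega<2.48$. The construction and its verification go through three stages: (i) defining the family, (ii) proving strong unique solvability, and (iii) computing the resulting $\omega$.

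For stage (i), I would work inside $[3]^k$ and restrict attention to rows of perfectly balanced composition, i.e.\ exactly $k/3$ copies of each of the symbols $1,2,3$. The full balanced set has cardinality $\binom{k}{k/3,k/3,k/3}\sim 3^k/\mathrm{poly}(k)$, but it is far too rich to be a strong USP on its own. I would carve out a subfamily using an additive no-3-term-arithmetic-progression (no-3AP) restriction \emph{à la} Salem--Spencer/Behrend: encode each balanced row as a lattice point in a suitable abelian group (for instance, via its composition vector in $\mathbb{Z}^k$ or via a Gray-style labeling), and keep only the rows whose encodings form a 3AP-free set. Behrend's bound preserves a $3^k\cdot 2^{-o(k)}$ fraction of the balanced rows, which is what is needed asymptotically.

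For stage (ii), I would verify strong unique solvability through Definition \ref{def:local-strong-USP}: for every triple $u,v,w\in P$ not all equal, locate a column $c$ with $(u_c,v_c,w_c)\in\mathcal{L}$. Balanced composition fixes the expected number of columns of each of the $27$ possible type-triples $(a,b,c)\in[3]^3$. If no column in $\mathcal{L}$ witnessed the triple, then the column-type multiplicities would satisfy a linear identity which translates into a 3-term additive relation among the encodings of $u,v,w$, contradicting the 3AP-free condition. Proposition \ref{prop:susp-to-local} (or its converse direction, which is immediate) then connects the local property back to the full strong USP definition, at a polynomial cost in the relevant parameters.

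For stage (iii), I would use Stirling to estimate $\log s(k)!\sim s(k)\log s(k)\sim 3^k\cdot k\log 3\cdot(1-o(1))$, substitute into the formula $\omega\le\min_m\bigl(\tfrac{3\log m}{\log(m-1)}-\tfrac{3\log s!}{s k\log(m-1)}\bigr)$ from Lemma \ref{lem:omega}, and numerically optimize over small integer $m$; the minimizer is a small constant and the numerical value drops below $2.48$. The \textbf{main obstacle} is stage (ii): the strong condition demands \emph{exactly two} of the three coordinate-collisions occur at some column, not two or more, and the forbidden column types $[3]^3\setminus\mathcal{L}$ include patterns where all three symbols agree. Translating this asymmetric exclusion into a clean additive forbidden pattern on the encodings, and verifying that the Behrend-type subset genuinely avoids it (rather than merely avoiding the USP-violating patterns), is where the real technical care is required and is ultimately what pins down the constant $2.48$ rather than something smaller.
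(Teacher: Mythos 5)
The paper does not actually prove this lemma---it is quoted from \cite[Proposition 3.8]{cksu05}---so your proposal has to be measured against that construction, and there is a decisive quantitative gap. The family you aim for is far too large to exist: you keep a $2^{-o(k)}$ fraction of the balanced rows, i.e.\ $s = 3^{k(1-o(1))}$ rows, and your Stirling estimate $\log s! \sim 3^k\cdot k\log 3$ uses exactly this. But every strong USP is a USP, and the bounds proved in this very paper exclude such a size: the unique-pieces bound gives $s \le 2^k$ (\autoref{lem:unique-pieces}), and the USP bound gives $s = O\bigl(k^2 (3/2^{2/3})^k\bigr)$ (\autoref{prop:usp-bound}), both exponentially smaller than $3^{k(1-o(1))}$. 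Equivalently, plugging $s = 3^{k(1-o(1))}$ into \autoref{lem:omega} drives the right-hand side below $2$ (at $m=3$ it tends to $0$), contradicting $\omega \ge 2$---this is precisely the paper's ``$\omega$ bound'' reasoning, and it shows no such family can exist; it would also contradict the barrier of \cite{bccgu16} that strong USPs cannot certify $\omega = 2$. The actual CKSU family has size only about $2^{2k/3}$ (capacity $2^{2/3}\approx 1.59$; compare the sizes $3, 10, 36, 136$ at $k = 3, 6, 9, 12$ attributed to \cite{cksu05} in \autoref{table:compare}), and it is this capacity which, after Stirling and optimizing \autoref{lem:omega} near $m = 6$, gives $(3\log m - 2\log 2)/\log(m-1) \approx 2.479 < 2.48$. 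With your claimed density the lemma would not say $2.48$; it would say something false.

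The step that would have to absorb this discrepancy is your stage (ii), and it is asserted rather than proved: the claim that the absence of a witnessing column in $\mathcal{L}$ forces a $3$-term additive relation among the encodings is exactly the part that fails for the balanced cube and that cannot be rescued by a subexponential-density pruning---by the bounds above, any correct pruning must discard all but an exponentially small $\bigl((2^{2/3}/3)^{k}\bigr)$-type fraction of the balanced rows. What you describe is essentially the Salem--Spencer/Coppersmith--Winograd-style argument that \cite{cksu05} use for plain USP capacity; the strong condition (\emph{exactly} two collisions per witnessing column, with forbidden patterns such as a $(1,2,3)$ column) is handled there by a different, explicit blockwise construction of width $3\ell$ and size $\Theta(4^{\ell})$, verified directly against \autoref{def:strong-USP} (your reduction to the local condition of \autoref{def:local-strong-USP} is fine, since local SUSP $\Rightarrow$ SUSP is immediate, but it does not supply the missing verification). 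So the proposal has a genuine gap on both fronts: the target size is impossible, and the core combinatorial step pinning down the achievable capacity---which is what actually produces the constant $2.48$---is missing.
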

Finally, they conjecture that strong uniquely solvable
puzzles provide a route to achieving quadratic-time matrix
multiplication.
Unfortunately, as mentioned in the introduction, this
conjecture was shown to be false.
\begin{lemma}[\cite{bccgu16}]
  Strong uniquely solvable puzzles cannot show $\omega < 2 +
  \epsilon$, for some $\epsilon > 0$.
\end{lemma}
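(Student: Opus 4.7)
The plan is to use the recent cap set / sunflower-free bounds (Croot-Lev-Pach \cite{clp17} and Ellenberg-Gijswijt) to upper bound the size $s$ of a width-$k$ strong USP by an exponential $c^k$ with $c < 3$, and then plug this into the $\omega$ formula of \autoref{lem:omega} to conclude that $\omega$ is bounded away from $2$.

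First, I would extract a combinatorial structure from a strong USP that can be attacked by the polynomial method. The natural target is a \emph{tri-colored sum-free set} in $\mathbb{F}_3^k$ (or a related abelian group): a collection of triples $(a_i, b_i, c_i)$ such that $a_i + b_j + c_\ell = 0$ if and only if $i = j = \ell$. The definition of a strong USP (\autoref{def:strong-USP}) is essentially a statement that the only triple of permutations $(\pi_1, \pi_2, \pi_3)$ avoiding a two-way collision is the diagonal one; reinterpreting the three subrow types $\{1,*\}^k$, $\{2,*\}^k$, $\{3,*\}^k$ as the three color classes and the ``exactly two'' collision condition as the forbidden-sum condition gives, after a small combinatorial encoding, a tri-colored sum-free set of size $s$ over a $k$-fold product of a constant-size group. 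I would carry out this reduction explicitly and verify that the ``exactly two of three'' clause of \autoref{def:strong-USP} corresponds precisely to the additive constraint, which is the delicate step.

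Next, I would invoke the cap set / tri-colored sum-free upper bound, which gives $s \le C \cdot \Gamma^k$ for some absolute $\Gamma < 3$ (and some constant $C$). This is where the heavy lifting is borrowed from \cite{clp17}; no new analysis is required, only the application of the bound to the object constructed above. Combined with Stirling's estimate $\log s! = s \log s - s + O(\log s)$, the fraction $\tfrac{\log s!}{s k}$ appearing in \autoref{lem:omega} is bounded above by $\log \Gamma + o(1)$, strictly less than $\log 3$.

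Finally, substituting into the expression
\[
\omega \;\le\; \min_{m \ge 3} \left(\frac{3\log m}{\log(m-1)} - \frac{3 \log s!}{s k \log(m-1)}\right),
\]
one checks that the subtracted term cannot be made large enough to push $\omega$ all the way down to $2$: a direct calculation gives a fixed gap $\epsilon > 0$ between the right-hand side and $2$, uniformly in $s$, $k$, and $m$. The main obstacle, by far, is the first step — showing that strong USPs really do encode tri-colored sum-free sets (as opposed to the weaker uniquely solvable puzzles, for which the ``at least two'' condition does not match the additive structure so cleanly). Once that reduction is in place, the cap set bound and the arithmetic manipulation of \autoref{lem:omega} are routine.
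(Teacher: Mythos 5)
The paper offers no proof of this lemma---it is imported verbatim from \cite{bccgu16}---so the only question is whether your sketch would establish it, and it would not as written: the decisive quantitative step is wrong. You claim that a bound $s \le C\,\Gamma^k$ with $\Gamma < 3$, fed into \autoref{lem:omega}, yields a fixed gap $\epsilon > 0$ above $2$. Plugging $s = \Gamma^k$ and Stirling into that formula gives $\omega \le \min_{m}\, 3\bigl(\log m - \log \Gamma\bigr)/\log(m-1) + o(1)$, and this expression reaches $2$ precisely when $\Gamma$ reaches $\min_m m/(m-1)^{2/3} = 3/2^{2/3} \approx 1.89$ (attained at $m=3$). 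So a barrier requires bounding the growth rate strictly below $3/2^{2/3}$, not below $3$. Indeed, the paper's own unconditional USP bound (Proposition~\ref{prop:usp-bound}) already gives $s = O\bigl(k^2 (3/2^{2/3})^k\bigr)$, far better than $3^k$, and it implies no barrier at all; a bound of the form $c^k$ with $3/2^{2/3} \le c < 3$ buys nothing.

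This defect also infects your choice of reduction target. A tri-colored sum-free set of size $s$ in a $k$-fold power of a constant-size group, bounded via Croot--Lev--Pach/Ellenberg--Gijswijt, gives only $s \le c^k$ with, e.g., $c \approx 2.76$ for $\mathbb{F}_3^k$---weaker than the trivial USP bound and hence useless here. The actual argument behind the cited result (the sunflower connection of \cite{asu13} together with the polynomial-method bounds, as carried out in \cite{bccgu16}) is a statement about the strong USP \emph{capacity} of infinite families: it encodes strong USPs into sunflower-free/tri-colored sum-free structures in a more delicate way than ``one point per row in a width-$k$ ambient group,'' so that the resulting bound pushes the capacity strictly below $3/2^{2/3}$, and only then does the arithmetic you describe in your final step produce an $\epsilon > 0$. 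As proposed, both the reduction target and the concluding calculation fall short of the lemma.
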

That said, there remains hope that the uniquely solvable
puzzle approach could beat the refinements of Coppersmith-Winograd
even if it cannot reach $\omega = 2$.


\section{Verifying Strong USPs}
\label{sec:verify}

The core focus of this article is the problem of verifying strong
USPs, i.e., given an \puz{s}{k} $P$, output YES if $P$ is a strong
USP, and NO otherwise.  In this section we discuss the design of
algorithms to solve this computational problem as a function of the
natural parameters $s$ and $k$.


All of the exact algorithms we develop in this section have worst-case
exponential running time.  However, asymptotic worst-case running time
is not the metric we are truly interested in.  Rather we are
interested in the practical performance of our algorithms and their
capability for locating new large strong USPs.  The algorithm that we
ultimately develop is a hybrid of a number of simpler algorithms and
heuristics.

We begin by discussing a na\"ive brute force algorithm based on the
definition of strong USP (\autoref{subsec:brute-force}), see how it
motivations a reduction to the 3D matching problem
(\autoref{sec:3DM}), and then how we might formulate a reduction to
the satisfiability and integer programming problems (Subsections
\ref{subsec:sat} \& \ref{subsec:mip}).  We then describe several
verification heuristics based on properties of strong USP
(\autoref{sec:heuristic}) and combine them with the verification
algorithms to produce a hybrid algorithm \textsc{Verify}
(\autoref{subsec:hybrid}).  As we discuss in
\autoref{subsec:performance}, our hybrid algorithm is quickly able to
check whether a given puzzle is a strong USP and aid in the search for
strong USP.

\subsection{Brute Force}
\label{subsec:brute-force}

The obvious algorithm for verification comes directly from the
definition of a strong USP.  Informally, we consider all ways of
permuting the twos and threes pieces relative to the ones pieces and
check whether the non-overlapping condition of
\autoref{def:strong-USP} is met.  A formal description of the
algorithm is found in \autoref{alg:brute-force}.

\begin{algorithm}[t]
  \caption{: Brute Force Verification}
  \label{alg:brute-force}
\begin{algorithmic}[1]
  \Require{An $(s,k)$-puzzle $P$.}
  \Ensure{YES, if $P$ is a strong USP and NO otherwise.}
  \Function{VerifyBruteForce}{$P$}
  \For{$\pi_2 \in \Sym{P}$}
    \For{$\pi_3 \in \Sym{P}$}
      \If{$\pi_2 \neq 1 \vee \pi_3 \neq 1$}
        \State{$found = false.$}
        \For{$r \in P$}
          \For{$i \in [k]$}
          \If{$\delta_{r_i, 1} + \delta_{(\pi_2(r))_i, 2} +
            \delta_{(\pi_3(r))_i, 3} = 2$} $found = true$. \EndIf
          \EndFor
        \EndFor
        \If{\textbf{not} $found$} \Return{NO.} \EndIf
      \EndIf
    \EndFor
  \EndFor
  \State{\Return{YES}.}
  \EndFunction
\end{algorithmic}
\end{algorithm}

The ones in Line 4 of \autoref{alg:brute-force} denote the
identity in $\Sym{P}$, and $\delta_{a,b}$ is the Kronecker delta
function which is one if $a = b$ and zero otherwise.  Observe that
\autoref{alg:brute-force} does not refer to the $\pi_1$ of
\autoref{def:strong-USP}.  This is because the strong USP
property is invariant to permutations of the rows and so $\pi_1$ can
be thought of as an arbitrary phase.  Hence, we fix $\pi_1 = 1$ to
simplify the algorithm.  Seeing that $|\Sym{P}| = s!$, we conclude
that the algorithm runs in time $O((s!)^2 \cdot s \cdot k \cdot
\poly{s})$ where the last factor accounts for the operations on
permutations of $s$ elements.  The dominant term in the running time
is the contribution from iterating over all pairs of permutations.
Finally, notice that if $P$ is a strong USP, then the algorithm runs
in time $\Theta((s!)^2 \cdot s \cdot k \cdot \poly{s})$, and that if
$P$ is not a strong USP the algorithm terminates early.  The
algorithm's poor performance made it unusable in our implementation,
however, its simplicity and direct connection to the definition made
its implementation a valuable sanity check against later more
elaborate algorithms (and it served as effective onboarding to the
undergraduate students collaborating on this project).

Although \autoref{alg:brute-force} performs poorly, examining
the structure of a seemingly trivial optimization leads to
substantially more effective algorithms. Consider the following
function on triples of rows $a, b, c \in P$: $f(a,b,c) = \vee_{i \in
  [k]} (\delta_{a_i,0} + \delta_{b_i,1} + \delta_{c_i,2} = 2).$ We can
replace the innermost loop in Lines 7 \& 8 of
\autoref{alg:brute-force} with the statement $found = found \vee
f(r, \pi_1(r), \pi_2(r))$.  Observe that $f$ neither depends on $P$,
$r$, nor the permutations, and that \autoref{alg:brute-force} no
longer depends directly on $k$.  To slightly speed up
\autoref{alg:brute-force} we can precompute and cache $f$ before
the algorithm starts and then look up values as the algorithm runs.
We precompute $f$ specialized to the rows in the puzzle $P$, and call
it $f_P$.


\subsection{Strong USP Verification to 3D Matching}
\label{sec:3DM}

It turns out to be more useful to work with $f_P$ than with $P$.  It
is convenient to think of $f_P$ as a function $f_P : P \times P \times
P \rightarrow \set{0, 1}$ that is the complement of the characteristic
function of the relations of a tripartite hypergraph $H_P = \langle P
\sqcup P \sqcup P, \bar{f_P}\rangle$ where the vertex set is the
disjoint union of three copies of $P$ and $f_P$ indicates the edges
that are not present in $H_P$.

Let $H = \langle P \sqcup P \sqcup P, E \sse P^3\rangle$ be a
tripartite 3-hypergraph.  We say $H$ has a \emph{3D matching}
($\mathrm{3DM}$) iff there exists a subset $M \sse E$ with $|M| = |P|$
and for all distinct edges $e_1, e_2 \in M$, $e_1$ and $e_2$ are
\emph{vertex disjoint}, i.e., $e_1 \cap e_2 = \emptyset$.  Determining
whether a hypergraph has a 3D matching is a well-known
$\mathrm{NP}$-complete problem (c.f., e.g., \cite{gj79}).  We say that
a 3D matching is \emph{non-trivial} if it is not the set
$\condset{(r,r,r)}{r \in P}$.  \autoref{fig:hypergraph}
demonstrates a 3-hypergraph with a non-trivial 3D matching.

The existence of non-trivial 3D matchings in $H_P$ is directly tied to
whether $P$ is a strong USP.

\begin{figure}[t]
  \centering
  \includegraphics{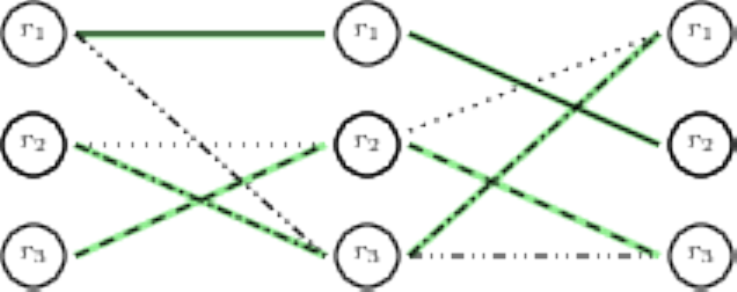}
  \caption{An example hypergraph $G$ with edges $E = \set{(r_1,
      r_1, r_2), (r_1, r_3, r_3), (r_2, r_2, r_1), \allowbreak
      (r_2, r_3, r_1), \allowbreak (r_3, r_2, r_3)}$.  The
    highlighted edges are a non-trivial 3D matching
    $M=\set{(r_1, r_1, r_2), (r_2, r_3, r_1),\allowbreak (r_3,
      r_2, r_3)}$ of $G$.}
  \label{fig:hypergraph}
\end{figure}

\begin{lemma}
  \label{lem:verify-to-3dm}
  A puzzle $P$ is a strong USP iff $H_P$ has no non-trivial 3D
  matching.
\end{lemma}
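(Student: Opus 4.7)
The plan is to prove \autoref{lem:verify-to-3dm} by establishing a bijection between triples of permutations in $\Sym{P}^3$ and perfect 3D matchings in the complete tripartite 3-hypergraph on $P \sqcup P \sqcup P$, and then showing that this bijection sends the strong USP witness condition directly onto the ``edge in $H_P$'' condition. Because $P$ is a set and everything is invariant to permutations of the rows, I would first reduce to the normalization $\pi_1 = 1$ (the identity), exactly as \autoref{alg:brute-force} does, so that the data of a candidate matching is just a pair $(\pi_2, \pi_3)$.

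The core correspondence is the following: given $\pi_2, \pi_3 \in \Sym{P}$, define $M(\pi_2, \pi_3) = \condset{(r, \pi_2(r), \pi_3(r))}{r \in P}$. This is a perfect 3D matching in the complete tripartite 3-hypergraph on $P \sqcup P \sqcup P$, since the first coordinates are all distinct by construction and the second (resp. third) coordinates are all distinct because $\pi_2$ (resp. $\pi_3$) is a bijection. Conversely, every perfect matching $M$ in this complete hypergraph uses each vertex of the first copy exactly once, so sending $r$ to the second and third coordinates of the unique edge in $M$ containing $r$ in the first copy recovers a unique pair $(\pi_2, \pi_3)$. Under this correspondence, the trivial matching $\condset{(r,r,r)}{r \in P}$ corresponds to $\pi_2 = \pi_3 = 1$, i.e., to $\pi_1 = \pi_2 = \pi_3$ before normalization.

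Next I would unpack the edge set of $H_P$: by definition, $(a, b, c) \in E(H_P)$ iff $f_P(a,b,c) = 0$, which means that for every column $i \in [k]$, $\delta_{a_i, 1} + \delta_{b_i, 2} + \delta_{c_i, 3} \neq 2$. Hence $M(\pi_2, \pi_3) \subseteq E(H_P)$ iff for every $r \in P$ and every $i \in [k]$, the triple $(r, \pi_2(r), \pi_3(r))$ has no column at which exactly two of $r_i = 1$, $(\pi_2(r))_i = 2$, $(\pi_3(r))_i = 3$ hold. This is precisely the negation of the ``there exists $r, c$ with exactly two of $\ldots$'' clause of \autoref{def:strong-USP}.

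Putting the pieces together gives both directions by contrapositive. If $P$ is not a strong USP, there exist $\pi_1, \pi_2, \pi_3 \in \Sym{P}$, not all equal, violating condition (ii) of \autoref{def:strong-USP}; after normalizing $\pi_1 = 1$, the matching $M(\pi_2, \pi_3)$ lies in $E(H_P)$ and is non-trivial, so $H_P$ has a non-trivial 3D matching. Conversely, a non-trivial 3D matching in $H_P$ yields (after normalization) a pair $(\pi_2, \pi_3) \neq (1,1)$ whose associated triple of permutations witnesses the failure of \autoref{def:strong-USP}. I do not expect a genuine obstacle here; the only thing requiring care is the bookkeeping of the phase $\pi_1$ and making sure ``non-trivial matching'' is matched with ``$\pi_1, \pi_2, \pi_3$ not all equal'' (as opposed to $\pi_2, \pi_3 \neq 1$ before normalization), using the row-permutation invariance noted immediately after \autoref{prop:susp-to-local}.
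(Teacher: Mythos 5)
Your proposal is correct and follows essentially the same route as the paper's proof: both rest on the correspondence $M = \condset{(\pi_1(r),\pi_2(r),\pi_3(r))}{r \in P}$ between permutation triples and vertex-disjoint edge sets of size $|P|$, the observation that $M \sse \bar{f_P}$ exactly when condition (ii) of \autoref{def:strong-USP} fails everywhere, and the matching of ``non-trivial'' with ``not all permutations equal.'' The only cosmetic difference is that you normalize $\pi_1 = 1$ in both directions (via the change of variables $r \mapsto \pi_1^{-1}(r)$), whereas the paper does so only when reading permutations off a matching and uses the un-normalized triple directly in the other direction.
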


\begin{proof}
  We first argue the reverse.  Suppose that $H_p$ has a non-trivial 3D
  matching $M$.  We show that $P$ is not a strong USP by using $M$ to
  construct $\pi_1, \pi_2, \pi_3 \in \Sym{P}$ that witness this.  Let
  $\pi_1$ be the identity permutation.  For each $r \in P$, define
  $\pi_2(r) = q$ where $(r,q,*) \in M$.  Note that $q$ is well defined
  and unique because $M$ is 3D matching and so has vertex disjoint
  edges.  Similarly define $\pi_3(r) = q$ where $(r,*,q) \in M$.
  Observe that by construction $$M =
  \condset{(\pi_1(r),\pi_2(r),\pi_3(r))}{r \in P}.$$ Since $M$ is a
  matching of $H_P$, $M \sse \bar{f_P}$.  Because $M$ is a non-trivial
  matching at least one edge in $(a,b,c) \in M$ has either $a \neq b$,
  $a \neq c$, or $b \neq c$.  This implies, respectively, that as
  constructed $\pi_1 \neq \pi_2$, $\pi_1 \neq \pi_3$, or $\pi_2 \neq
  \pi_3$.  In each case we have determined that $\pi_1$, $\pi_2$, and
  $\pi_3$ are not all identical.  Thus we determined permutations such
  that for all $r \in P$, $f(\pi_1(r), \pi_2(r), \pi_3(r)) = 0$.  This
  violates Condition (ii) of \autoref{def:strong-USP}, hence
  $P$ is not a strong USP.

  The forward direction is symmetric.  Suppose that $P$ is not a
  strong USP. We show that $H_P$ has a 3D matching.  For $P$ not to be
  a strong USP there must exist $\pi_1, \pi_2, \pi_3 \in \Sym{P}$ not
  all identical such that Condition (ii) of
  \autoref{def:strong-USP} fails.  Define $e(r) =
  (\pi_1(r),\pi_2(r),\pi_3(r))$ and $M = \condset{e(r)}{r \in P}$.
  Since Condition (ii) fails, we have that $f_P(e(r)) = false$ for all
  $r \in P$.  This means that for all $r \in P$, $e(r) \in \bar{f_P}$
  and hence $M \sse \bar{f_P}$.  Since $\pi_1$ is a permutation, $|M|
  = |P|$.  Observe that $M$ is non-trivial because not all of the
  permutations are identical and there must be some $r \in P$ with
  $e(r)$ having non-identical coordinates.  Thus $M$ is a non-trivial
  3D matching.\qed
\end{proof}

As a consequence of \autoref{def:strong-USP}, strong-USP verification
is in \coNP{}.  Note that although 3D matching is an \NP-complete
problem, \autoref{lem:verify-to-3dm} does not immediately imply that
verification of strong USPs is \coNP-complete because $H_P$ is not an
arbitrary hypergraph.  It remains open whether strong-USP verification
is \coNP-complete.  \autoref{lem:verify-to-3dm} implies that to verify
$P$ is a strong USP it suffices to determine whether $H_P$ has a
non-trivial 3D matching.  In the subsequent subsections we examine
algorithms for the later problem.  We can, in retrospect, view
\autoref{alg:brute-force} as an algorithm for solving 3D matching.

We note that the parameters $s$ and $k$ are not fully independent.
First, $s \le 3^k$ because the maximum number of rows in a puzzle of
width $k$ is $|[3]^k| = 3^k$.  Second, we eliminate the dependence on
$k$ entirely by transforming an \puz{s}{k} into a 3D matching instance
on the vertex set $[s]^3$.  However, this transformation is not
without cost, because the size of $H_P$ is a function of the cube of
$s$ rather than linear in the size of the puzzle $s \cdot k$.

\subsection{Dynamic Programming}
\label{sec:dp}

The realization that the verification of strong USPs is a
specialization of 3D matching leads to a dynamic programming algorithm
for verification that runs in linear-exponential time $O(2^{2s}
\poly{s} + \poly{s,k})$.  The reduction allows us to replace the
permutations from $\Sym{P}$ with subsets of $P$ and effectively reduce
the cost of the outer loops of \autoref{alg:brute-force} from $s! =
\Theta(2^{s\log s})$ to $2^s$.

\autoref{alg:bi} describes a recursive bidirectional dynamic
programming algorithm for strong-USP verification that uses the 3D
matching instance.
\begin{algorithm}[t]
  \caption{: Bidirectional Dynamic Programming Verification}
  \label{alg:bi}
\begin{algorithmic}[1]
  \Require{An $(s,k)$-puzzle $P$.}
  \Ensure{YES, if $P$ is a strong USP and NO otherwise.}
  \Function{VerifyDynamicProgramming}{$P$}
  \State{Let $T = \emptyset$.}
  \State{Construct 3D matching instance $H_P$.}
  \Function{SearchHalf}{$\ell, Q,\ell_Q, R,\ell_R, \delta, t$}
  \If{$\ell = t$}
  ~~\If{$\delta = 1$} \Comment{Forward Base Case}
  ~~~~\State{Insert $(Q,R)$ into $T$.}
  ~~~~\State{\Return{$false$}.}
  ~~\Else \Comment{Reverse Base Case}
  ~~~~\If{$(P-Q, P-R) \in T$} \State{\Return{$true$}.} \Else
  \State{\Return{$false$}.} \EndIf
  ~~\EndIf
  \EndIf
  \State{$res = false$.} \Comment{Recursive Case}
  \For{$\ell'_Q = \ell_Q + 1$ to $s$}
  ~~\For{$\ell'_R = \ell_R + 1$ to $s$}
  ~~~~\If{$(p_\ell, p_{\ell'_Q}, p_{\ell'_R}) \in H_P \wedge \neg res$}
  ~~~~~~\State{$res =$ \textsc{SearchHalf}$(\ell + \delta, Q \cup
    \set{p_{\ell'_Q}}, \ell'_Q, R \cup \set{p_{\ell'_R}}, \ell'_R,
    \delta, t)$.} \EndIf
    \EndFor
  \EndFor
  \State{\Return{$res$.}}
  \EndFunction
  \State{\textsc{SearchHalf}$(1,\emptyset, 0, \emptyset, 0, 1, \lfloor
    s / 2 \rfloor + 1)$.}
  \State{\Return{\textsc{SearchHalf}$(s, \emptyset, 0, \emptyset, 0,
      -1, \lfloor s/2 \rfloor)$}.}
  \EndFunction
\end{algorithmic}
\end{algorithm}
The algorithm consists of two phases. Let $t = \lfloor s/2
\rfloor$. The first phase determines all possible sets $Q,R \sse P$
with $|Q| = |R| = t$ such that there is 3D matching $M_1$ of $H_P$
when restricted to the vertices $\set{p_1,p_2, \ldots, p_t} \sqcup Q
\sqcup R$.  The sets $Q,R$ satisfying the requirement are stored in a
table $T$ during the first phase on Line 7.  The second phase
determines all possible sets $Q,R \sse P$ with $|Q| = |R| = s - t$
such that there is a 3D matching $M_2$ of $H_P$ when restricted to the
vertices $\set{p_{t+1},p_{t+2},\ldots,p_s} \sqcup Q \sqcup R$.  For
each pair $(Q,R)$ the algorithm considers in the second phase, it
checks whether $(P - Q, P - R)$ was inserted into $T$ during the first
phase.  If the pair is present, it means that there is a 3D matching
of $H_P$ which is $M = M_1 \cup M_2$.  This works because, by Line 10,
$M_1$ and $M_2$ are partial 3D matchings on $\set{p_1,\ldots,p_t}
\sqcup (P - R) \sqcup (P - Q)$ and $\set{p_{t+1},\ldots p_s} \sqcup R
\sqcup Q$, respectively, which implies that $M_1$ and $M_2$ are vertex
disjoint.  The first phase always returns $false$, which is ignored,
and the second phase returns whether a complete matching could be
found, and, hence, by \autoref{lem:verify-to-3dm}, whether $P$ is a
strong USP.

The running time of this algorithm is dominated by the number of pairs
of sets $(Q,R)$ it examines.  Observe that rows of $P$ are considered
in order in Lines 15 \& 16.  Further, the algorithm tracks the index
of the last elements added to $Q$ and $R$ in $\ell_Q$ and $\ell_R$,
respectively.  The algorithm only adds new elements to $Q$ or $R$ that
have higher indexes than ones previously added.  Altogether this
implies that each pair of sets $(Q,R)$ is only considered at most once
during a phase.  Since $Q, R \sse P$, there are at most $\sum_{i =
  0}^t \binom{s}{i} \cdot \binom{s}{i} \le (\sum_{i = 0}^t
\binom{s}{i})^2 \le (2^s)^2 = 4^s$ pairs $(Q,R)$.  This means that
\textsc{SearchHalf} is called at most $4^s$ times during each phase.
Hence the running time of the algorithm is $O(4^s \cdot s^2 \cdot
\poly{s} + T_{3DM}(s,k))$ where $s^2$ factor comes from the inner
loops, $\poly{s}$ time to manipulate the sets and track the contents
of $T$ as a hash table, and $T_{3DM}(s,k)$ accounts for the time to
construct $H_P$.  The memory requirements of \autoref{alg:bi} are
similarly high---the first phase uses $O(4^s \cdot s)$ bits to store
$T$.

Note that \autoref{alg:bi} does not early terminate on $P$ that are
strong USP, because it must search through all pairs before
determining that none can be found.  The algorithm could be modified
to allow early termination when $P$ is not a strong USP by causing the
second phase of search to immediately return in Line 18 once the first
3D matching witness has been located.  However, this still requires
the first phase to run to completion.  A remedy for this would be to
run both phases in parallel and have them check against each other.
We chose not to because it would substantially complicate the
implementation and would be unlikely to ultimately improve the
performance of our combined algorithms.


For comparison, more advanced techniques like those of Bj\"{o}rklund
et al.~can achieve a better asymptotic time of $O(2^s\poly{s})$
\cite{bhkk17}.  We chose not to implement their algorithm, because we
judged that it would not substantially increase the domain for which
verification was possible.

\subsection{3D Matching to Satisfiability} 
\label{subsec:sat}

By \autoref{lem:verify-to-3dm}, one can determine whether a puzzle
$P$ is a strong USP by constructing the graph $H_P$ and deciding
whether it has a non-trivial 3D matching.  Here we reduce our 3D
matching problem to the satisfiability (SAT) problem on conjunctive
normal form (CNF) formulas and then use a state-of-the-art SAT solver
to resolve the reduced problem. To perform the reduction, we convert
the graph $H_P$ into a CNF formula $\Psi_P$, a depth-2 formula that is
the AND of ORs of Boolean literals.  We construct $\Psi_P$ so that
$\Psi_P$ is satisfiable iff $H_P$ has a non-trivial 3D matching.

Let $H_P = \langle V = P \sqcup P \sqcup P, E \subseteq P^3 \rangle$
be the 3D matching instance associated with the puzzle $P$.  Our goal
is to determine whether there is a non-trivial 3D matching $M
\subseteq E$.  A na\"{i}ve reduction would be to have variables $M_{u,v,w}$
indicating inclusion of each edge $(u,v,w) \in P^3$ in the
matching. This results in a formula $\Psi_P$ with $s^3$ variables and
size $\Theta(s^5)$ because including an edge $e \in P^3$ excludes the
$\Theta(s^2)$ edges $e'$ with $e \cap e' \neq \emptyset$.  To decrease
the size of $\Psi_P$ we instead use sets of variables to indicate
which vertices in the second and third part of $V$ are matched with
each vertex in the first part.  In particular we have Boolean
variables $M_{u,v}^1$ and $M_{u,w}^2$ for all $u, v, w \in P$, and
these variable map to assignments in the na\"{i}ve scheme in the
following way: $M_{u,v}^1 \wedge M_{u,w}^2 \Leftrightarrow M_{u,v,w}$.

We now write our CNF formula for 3D matching.  First, we have clauses
that prevents non-edges from being in the matching:
\begin{equation}
  \Psi_P^{\textrm{non-edge}} = \bigwedge_{(u,v,w) \in \overline{E}}
  (\neg M_{u,v}^1 \vee \neg M_{u,w}^2).
\end{equation}
Second, we add clauses require that every vertex in $H_P$ is matched
with some edge:
\begin{equation}
\begin{split}
  \Psi_P^{\ge 1} =
  &\left(\bigwedge_{u \in P} (\vee_{v\in P} ~M_{u,v}^1) \wedge (\vee_{w \in P} ~M_{u,w}^2)\right) \\
  &\wedge \left(\bigwedge_{v \in P} (\vee_{u \in P} ~M_{u,v}^1)\right)
  \wedge \left(\bigwedge_{w \in P} (\vee_{u \in P} ~M_{u,w}^2)\right).
  \end{split}	
\end{equation}
Third, we require that each vertex be matched with at most one edge
and so have clauses that exclude matching edges that overlap on one or
two coordinates.
\begin{equation}
  \Psi_P^{\le 1} = \bigwedge_{i \in \set{1,2}} \bigwedge_{(u,v),
    (u',v') \in P^2} (u = u' \vee v = v') \wedge (u,v \neq u',v')
  \Rightarrow \neg M_{u,v}^i \vee \neg M_{u',v'}^i.
\end{equation}
Fourth, we exclude the trivial 3D matching by requiring that at least
one of the diagonal edges not be used:
  $\Psi_P^{\textrm{non-trivial}} = \bigvee_{u \in P} \neg M_{u,u}^1 \vee
  \neg M_{u,u}^2.$
Finally, we AND these into the overall CNF formula:
  $\Psi_P = \Psi_P^{\textrm{non-edge}} \wedge \Psi_P^{\le 1} \wedge
  \Psi_P^{\ge 1} \wedge \Psi_P^{\textrm{non-trivial}}.$
The size of the CNF formula $\Psi_P$ is $\Theta(s^3)$, has $2s^2$
variables, and is a factor of $s^2$ smaller than the na\"{i}ve
approach.  Thus we reduce 3D matching to satisfiability by converting
the instance $H_P$ into the CNF formula $\Psi_P$.




\subsection{3D Matching to Integer Programming}
\label{subsec:mip}

In parallel to the previous subsection, we use the connection between
verification of strong USPs and 3D matching to reduce the former to
integer programming, another well-known \NP{}-complete problem (c.f.,
e.g., \cite{kv12}) and then apply a state-of-the-art solver to resolve
it.  Again, let $H_P = \langle V, E \rangle$ be the 3D matching
instance associated with $P$.  We construct an integer program $Q_P$
over $\set{0,1}$ that is infeasible iff $P$ is a strong USP.  Here the
reduction is simpler than the previous one because linear constraints
naturally capture matching.

We use $M_{u,v,w}$ to denote a variable with values in $\set{0,1}$ to
indicate whether the edge $(u,v,w) \in P^3$ is present in the
matching. To ensure that $M$ is a subset of $E$ we add the following
edge constraints to $Q_P$: $\forall u,v,w \in P, \forall (u, v, w)
\not\in E, M_{u,v,w} = 0.$ We also require that each vertex in each of
the three parts of the graph is incident to exactly one edge in $M$.
This is captured by the following vertex constraints in $Q_P$:
$\forall w \in P, \sum_{u,v \in P} M_{u,v,w} = \sum_{u,v \in P}
M_{u,w,v} = \sum_{u,v \in P} M_{w,u,v} = 1.$ Lastly, since we need
that the 3D matching be non-trivial we add the constraint: $\sum_{u
  \in P} M_{u,u,u} < |P|.$

To check whether $P$ is a strong USP we determine whether $Q_P$ is not
feasible, i.e., that no assignment to the variables $M$ satisfy all
constraints.  We note that reduction from 3D matching to IP is
polynomial time and that there are $s^3$ variables in $Q_P$, and that
the total size of the constraints is $s^3 \cdot \Theta(1) + 3s \cdot
\Theta(s^2) + 1 \cdot \Theta(s^3) = \Theta(s^3)$, similar to size of
$\Psi_P$ in the SAT reduction.
 


\subsection{Heuristics}
\label{sec:heuristic}

Although the exact algorithms presented in the previous sections make
substantial improvements over the brute force approach, the resulting
performance remains impractical.  To resolve this, we also develop
several fast verification heuristics that may produce the
non-definitive answer MAYBE in place of YES or NO.  Then, to verify a
puzzle $P$ we run this battery of fast heuristics and return early if
any of the heuristics produce a definitive YES or NO.  When all of the
heuristics result in MAYBE, we then run one of the slower exact
algorithms that were previously discussed.  The heuristics have
different forms, but all rely on the structural properties of strong
uniquely solvable puzzles.

\subsubsection{Downward Closure}
The simplest heuristics we consider is based on the fact that strong
USPs are downward closed.

\begin{lemma}
  \label{lem:downward-closed}
  If $P$ is a strong USP, then so is every subpuzzle $P' \sse P$.  
\end{lemma}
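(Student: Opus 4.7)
The plan is to prove the contrapositive: if some subpuzzle $P' \subseteq P$ fails to be a strong USP, then $P$ itself fails. Assuming $P'$ is not a strong USP, \autoref{def:strong-USP} yields permutations $\pi_1', \pi_2', \pi_3' \in \Sym{P'}$, not all equal, such that no $r \in P'$ and $c \in [k]$ satisfy the ``exactly two'' condition. The natural move is to extend each $\pi_i'$ to a permutation $\pi_i \in \Sym{P}$ by acting as the identity on $P \setminus P'$, and then to show these extensions witness that $P$ is not a strong USP.

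First I would check that $\pi_1, \pi_2, \pi_3$ are not all equal: this is immediate because their restrictions to $P'$ are $\pi_1', \pi_2', \pi_3'$, which were chosen to be not all equal. Next I would verify that no $(r, c) \in P \times [k]$ can witness condition (ii) of \autoref{def:strong-USP} for $\pi_1, \pi_2, \pi_3$, splitting on whether $r \in P'$ or $r \in P \setminus P'$. For $r \in P'$, each $\pi_i(r)$ equals $\pi_i'(r)$, so any witness would already be a witness for the $(\pi_1', \pi_2', \pi_3')$ failure of $P'$, which does not exist by assumption.

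For $r \in P \setminus P'$, the identity extension gives $\pi_1(r) = \pi_2(r) = \pi_3(r) = r$, so the ``exactly two'' condition would demand that exactly two of $r_c = 1$, $r_c = 2$, $r_c = 3$ hold in a single column $c$. Since $r_c$ takes exactly one value in $\ordset{3}$, at most one of these equalities can hold, so the condition fails on every column. Combining the two cases, no witness exists anywhere in $P$, contradicting the assumption that $P$ is a strong USP.

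There is no real obstacle here; the only mild subtlety is making sure the identity extension yields a genuine permutation of $P$ (it does, since it is a bijection built by gluing two bijections on disjoint sets) and keeping track of the logical quantifier flip when negating \autoref{def:strong-USP}. The argument is entirely syntactic from the definition, and the same template would work verbatim for ordinary USPs via \autoref{def:USP}.
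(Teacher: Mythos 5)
Your proposal is correct and is essentially the paper's own argument, just phrased contrapositively: the paper fixes permutations of $P$ that act as the identity on $P \setminus P'$ and notes that the guaranteed witness row must lie in $P'$, while you extend witnesses from $\Sym{P'}$ by the identity and observe that rows of $P \setminus P'$ cannot satisfy the exactly-two condition since each fixed row matches exactly one value per column. The two directions are logically equivalent renderings of the same identity-extension idea, so no further comparison is needed.
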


\begin{proof}
Let $P$ be a strong USP and $P' \sse P$.  By \autoref{def:strong-USP},
for every $(\pi_1, \pi_2, \pi_3) \in \Sym{P}^3$ not all identity,
there exist $r \in P$ and $i \in [k]$ such that exactly two of the
following hold: $(\pi_1(r))_i = 1$, $(\pi_2(r))_i = 2$, $(\pi_3(r))_i
= 3$.  Consider restricting the permutations to those that fix the
elements of $P \backslash P'$.  For these permutations it must be the
case that $r \in P'$ because otherwise $r \in P \backslash P'$ and
there is exactly one $j \in [3]$ for which $(\pi_j(r))_i = j$ holds.
Thus we can drop the elements of $P \backslash P'$ and conclude that
for every tuple of permutations in $\Sym{P'}$ the conditions of
\autoref{def:strong-USP} hold for $P'$, and hence that $P'$ is a
strong USP. \qed
\end{proof}

This leads to a polynomial-time heuristic that can determine that a
puzzle is not a strong USP.  Informally, the algorithm takes an
$(s,k)$-puzzle $P$ and $s' \le s$, and verifies that all subsets $P'
\subseteq P$ with size $|P'| = s'$ are strong USPs.  If any subset
$P'$ is not a strong USP, the heuristic returns NO, and otherwise it
returns MAYBE.  For completeness, this algorithm is described in
\autoref{alg:heuristic-downward-closed}.

\begin{algorithm}[t]
  \caption{: Downward-Closure Heuristic}
  \label{alg:heuristic-downward-closed}
  \begin{algorithmic}[1]
    \Require{An $(s,k)$-puzzle $P$, and size $s' \le s$.}
    \Ensure{NO, if $P$ has a set of $s'$ rows that do not form a strong
      USP, and MAYBE otherwise.}
    \Function{HeuristicDownwardClosed}{$P, s'$}
    \For{$P' \sse P, |P'| = s'$}
    \If{$P'$ is not a strong USP} \Return{NO.} \EndIf
    \EndFor{}
    \State{\Return{MAYBE}.}
    \EndFunction
  \end{algorithmic}
\end{algorithm}

This algorithm runs in time $O({s \choose s'}\cdot T(s', k))$ where
$T(s',k)$ is the runtime for verifying an $(s',k)$-puzzle.  In
practice we did not apply this heuristic for $s'$ larger than $3$.
When $s'$ is some constant $d$, the running time becomes $O(s^d \cdot
T(d, k)) = O(s^d k)$ using the brute force algorithm
(\autoref{alg:brute-force}) for verification of the puzzle $P'$.

\subsubsection{Unique Pieces}
Every strong uniquely solvable puzzle is a uniquely solvable puzzle.
A necessary condition for a puzzle to be a USP is that for each
element in $[3]$, the collection of subrows contains no
duplicates.

\begin{lemma}[{Implicit in \cite{cksu05}}]
  \label{lem:unique-pieces}
  If $P$ is a USP, then for all $e \in [3]$, and distinct
  rows $r_1, r_2 \in P$, there is a column $c \in [k]$ were one
  of the rows $r_1$ or $r_2$ has an $e$ and the other one does not.
\end{lemma}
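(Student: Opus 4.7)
I will prove the contrapositive: assume there exist distinct rows $r_1, r_2 \in P$ and some $e \in [3]$ such that for every column $c \in [k]$, $r_1$ has an $e$ at $c$ iff $r_2$ has an $e$ at $c$. Equivalently, the $e$-subrow of $r_1$ equals the $e$-subrow of $r_2$. My goal is to exhibit a triple $\pi_1, \pi_2, \pi_3 \in \Sym{P}$, not all equal, that witnesses the failure of \autoref{def:USP}, i.e., for every $r \in P$ and $c \in [k]$ at most one of $(\pi_1(r))_c = 1$, $(\pi_2(r))_c = 2$, $(\pi_3(r))_c = 3$ holds.

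By the symmetry of the definition in the three colors (each color $e'$ is paired with its own permutation $\pi_{e'}$), assume without loss of generality $e = 2$. Take $\pi_1 = \pi_3 = \mathrm{id}$ and let $\pi_2$ be the transposition that swaps $r_1$ and $r_2$ and fixes every other row; these are not all equal because $r_1 \neq r_2$. I then verify the non-overlap condition by cases on $r$. For $r \notin \set{r_1, r_2}$, all three permutations fix $r$, so the three conditions reduce to $r_c = 1$, $r_c = 2$, $r_c = 3$, which are pairwise mutually exclusive.

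The interesting cases are $r = r_1$ and $r = r_2$, and this is where the hypothesis on the $2$-subrows is used. For $r = r_1$ we have $\pi_2(r) = r_2$, so the three conditions become $(r_1)_c = 1$, $(r_2)_c = 2$, $(r_1)_c = 3$. The first and third are mutually exclusive, so the only risk is that $(r_2)_c = 2$ holds simultaneously with one of them. But by hypothesis $(r_2)_c = 2$ implies $(r_1)_c = 2$, which rules out both $(r_1)_c = 1$ and $(r_1)_c = 3$. The case $r = r_2$ is symmetric. Hence at most one condition holds for every $(r, c)$, so \autoref{def:USP} fails and $P$ is not a USP. The main (and only) subtle point is getting the case analysis on $r \in \set{r_1, r_2}$ right and noticing that the assumed equality of the $e$-subrows is exactly what prevents an overlap after the swap; the rest is bookkeeping.
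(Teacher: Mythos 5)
Your proposal is correct and follows essentially the same route as the paper: take the transposition $(r_1\, r_2)$ for the permutation associated with $e$ and the identity for the other two, then use the equality of the $e$-subrows to show no column ever has two of the conditions of \autoref{def:USP} holding, so $P$ is not uniquely solvable. Your explicit case analysis on $r \in \set{r_1, r_2}$ just spells out what the paper summarizes by saying the permuted puzzle is identical to $P$.
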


\begin{proof}
Suppose, for the sake of contradiction, that this is not the case, and
distinct rows $r_1, r_2 \in P$ have $e$ in exactly the same columns
for some $e \in [3]$.  We show that $P$ is not a USP.  Choose $\pi_e =
(r_1 r_2)$, i.e., the permutations that transposes the subrows for $e$
in rows $r_1$ and $r_2$.  Choose the other two permutations for the
elements of $[3] \backslash \set{e}$ to be the identity.  Since the
permutations are not all the identity, the second half of
\autoref{def:USP} applies.  However, the puzzle that results from the
permutations is identical to $P$ and for all $c \in [k]$ and each row
$r \in P$ there exists exactly on $i \in [3]$ where $(\pi_i(r))_c =
i$.  Hence the definition of uniquely solvable is not satisfied and we
have a contradiction. \qed
\end{proof}
Note that the reverse direction of \autoref{lem:unique-pieces} does
not hold.  The puzzle in \autoref{fig:puzzle} is an example of this:
It is not uniquely solvable, but the subrows for each element are
distinct.

We can make \autoref{lem:unique-pieces} effective as via a linear-time
heuristic capable of ruling out puzzles that are not (strong) USPs.
Although straightforward, for completeness we formalize our approach
in \autoref{alg:unique-pieces}.
\begin{algorithm}[t]
  \begin{algorithmic}[1]
    \Require{An \puz{s}{k} $P$.}
    \Ensure{NO, if a witness is found for $P$ not being a (strong)
      USP, and MAYBE otherwise.}
    \Function{HeuristicUniquePieces}{$P$}
    ~~\State{Initialize empty sets $S_1$, $S_2$, $S_3$.}
    ~~\For{$r \in P$}
    ~~~~\For{$e \in [3]$}
    ~~~~~~\State{Let $h = \condset{c \in [k]}{r_c = e}$.}
    ~~~~~~\If{$h \in S_e$} \Return{NO}. \EndIf
    ~~~~~~\State{$S_e = S_e \cup \set{h}$.}
    ~~~~\EndFor
    ~~\EndFor
    ~~\State{\Return{MAYBE}.}
    \EndFunction
  \end{algorithmic}
  \caption{: Unique Pieces Heuristic
  \label{alg:unique-pieces}}
  
\end{algorithm}
When the sets are implemented as hash tables, the expected running
time of this algorithm is $O(s \cdot k)$ time, which is linear in the
size of the puzzle $P$.  An alternative worst-case $O(s \cdot k)$ time
implementation uses radix sort to sort the characteristic sequences of
the subrows as binary numbers and then scans adjacent rows to to
detect duplication.

The unique pieces heuristic is equivalent to the
downward-closure heuristic for subpuzzles of size two.

\begin{lemma}
  \label{lem:heuristic-equiv}
Let $P$ be an \puz{s}{k}, then $\textsc{HeuristicUniquePieces}(P) =
\textsc{HeuristicDownwardClosed}(P, 2)$.
\end{lemma}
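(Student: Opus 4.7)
The plan is to reduce both heuristics to the same binary condition, namely whether some pair of rows of $P$ shares an $e$-piece for some $e \in [3]$. Since each heuristic only outputs NO or MAYBE, it suffices to show that they return NO on exactly the same inputs. By inspecting the pseudocode, \textsc{HeuristicUniquePieces}$(P)$ returns NO iff there are distinct rows $r_1, r_2 \in P$ and $e \in [3]$ such that $r_1$ and $r_2$ have an $e$ in exactly the same set of columns, while \textsc{HeuristicDownwardClosed}$(P, 2)$ returns NO iff there exist distinct rows $r_1, r_2$ with $\{r_1, r_2\}$ not a strong USP. So the core claim to establish is the following characterization: for any $(2,k)$-puzzle $\{r_1, r_2\}$, it is a strong USP iff for every $e \in [3]$ the $e$-pieces of $r_1$ and $r_2$ differ.

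For the forward direction of the characterization I would invoke \autoref{lem:unique-pieces}: every strong USP is in particular a USP, and a USP has distinct subrows for each element $e \in [3]$. For the reverse direction, suppose the unique-pieces condition holds for $\{r_1, r_2\}$, and let $(\pi_1, \pi_2, \pi_3) \in \Sym{\{r_1, r_2\}}^3$ be a triple that is not all equal. Because $|\Sym{\{r_1, r_2\}}| = 2$, by pigeonhole two of the three permutations must coincide; let $e \in [3]$ be the unique ``odd'' index so that $\pi_e$ differs from the other two (uniqueness follows from non-triviality and the pigeonhole). Apply unique pieces for this specific $e$ to obtain a column $c$ and a row $r^* \in \{r_1, r_2\}$ with $(r^*)_c = e$ and $(\sigma(r^*))_c \neq e$, where $\sigma$ denotes the transposition. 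Setting $r = \pi_e^{-1}(r^*)$, we have $(\pi_e(r))_c = e$, while for the two other indices $m \in [3] \setminus \{e\}$, $\pi_m(r) = \sigma(r^*)$, so $(\pi_m(r))_c \in [3] \setminus \{e\}$. Since $[3] \setminus \{e\}$ has exactly two elements and $(\sigma(r^*))_c$ lies in it, exactly one $m$ satisfies $(\pi_m(r))_c = m$. Combined with the $e$-condition this gives exactly two true equalities, establishing condition (ii) of \autoref{def:strong-USP} and hence the strong USP property.

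With the characterization in hand, the lemma follows immediately: $\textsc{HeuristicDownwardClosed}(P,2)$ returns NO iff some $P' \subseteq P$ with $|P'| = 2$ fails to be a strong USP, iff (by the characterization) some pair of rows shares an $e$-piece for some $e$, iff $\textsc{HeuristicUniquePieces}(P)$ returns NO; otherwise both return MAYBE. The main obstacle is simply the case analysis in the reverse direction of the characterization, but the ``pigeonhole plus odd-one-out'' observation collapses the six non-trivial triples of permutations into a single uniform argument parameterized by the odd index $e$, so the case analysis never needs to be spelled out separately for each triple.
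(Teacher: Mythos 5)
Your proof is correct, and for the crux of the lemma it takes a genuinely different route from the paper. Both arguments share the easy half: that a repeated $e$-piece lets you swap subrows, so a pair with identical pieces is not a (strong) USP --- you get this by citing \autoref{lem:unique-pieces}, the paper by re-running the swap argument inline. The difference is in showing that a pair $\set{r_1,r_2}$ that is \emph{not} a strong USP must have a repeated piece. The paper argues on column types: after a w.l.o.g.\ normalization it assumes some column is $(1,2)$, claims that no column from $\set{(1,3),(2,3),(3,2),(3,1)}$ can then occur (since such a column together with $(1,2)$ would already witness strength), concludes the columns all lie in $\set{(1,2),(2,1),(1,1),(2,2),(3,3)}$, and reads off that the twos pieces coincide. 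You instead prove the contrapositive directly from \autoref{def:strong-USP}: since $\Sym{\set{r_1,r_2}}$ has only two elements, any non-trivial triple $(\pi_1,\pi_2,\pi_3)$ has a unique ``odd'' index $e$, and the column guaranteed by distinctness of the $e$-pieces yields exactly two satisfied conditions (the $e$-condition plus exactly one of the remaining two, because the other row's entry lies in $[3]\setminus\set{e}$). Your version is more uniform --- no relabeling w.l.o.g., no implicit sub-claim about which column pairs witness strength, and all six non-trivial permutation triples handled at once --- and it immediately gives \autoref{cor:size-2} (size-two USPs are strong) by composing with \autoref{lem:unique-pieces}. What the paper's version buys is an explicit structural picture of non-strong pairs (only columns of the five listed types, up to symmetry), which is mildly informative but not needed for the lemma.
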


\begin{proof}
  We show both directions.

  Suppose that $P$ fails the unique pieces heuristic for, w.l.o.g.,
  $e=1$, then there are distinct rows $r_1, r_2 \in P$ where the cells
  that contain $1$ are all in the same columns.  This means we can
  swap those $1$'s subrows without causing overlap or changing the
  puzzle.  This implies that $P' = \set{r_1, r_2}$ is not a (strong)
  USP.  Since $|P'| = 2$ and $P' \sse P$, the downward closure
  heuristic for $s' = 2$ will also conclude that $P$ is not a (strong)
  USP.

  Suppose that $P$ fails the downward-closure heuristic for $s' = 2$.
  Then there is a pair of distinct rows $r_1, r_2 \in P$ for which $P'
  = \set{r_1, r_2}$ is not a strong USP.  Suppose there is no columns
  were $r_1$ and $r_2$ differ, then the subrows of $r_1$, $r_2$ are
  the same for all elements, and so $P$ fails the unique pieces
  heuristic.  For the other case, suppose there is a least one column
  $c \in [k]$ where $r_1$ and $r_2$ differ.  W.l.o.g., let that column
  be $((r_1)_c, (r_2)_c) = (1, 2)$.  Because $P'$ is not an SUSP and
  this column is $(1, 2)$, there can be other no columns that are in
  from the set $\set{(1, 3), (2, 3), (3, 2), (3, 1)}$ otherwise they
  would form an SUSP with the column $(1, 2)$.  This means the only
  columns that $P'$ contains are from the set $\set{(1, 2), (2, 1),
    (1, 1), (2, 2), (3, 3)}$.  Therefore, the columns which contain
  $2$ must match and the subrows for $2$ in $r_1$ and $r_2$ are
  identical.  Thus, $P'$, and so $P$, fails the unique pieces
  heuristic. \qed
\end{proof}
A corollary of this proof is that for size-two puzzles, every USP is
also a strong USP.
\begin{corollary}
  \label{cor:size-2}
  Let $P$ be a \puz{2}{k}, if $P$ is a uniquely solvable puzzle, then
  $P$ is a strong uniquely solvable puzzle.
\end{corollary}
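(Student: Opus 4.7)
The plan is to prove the contrapositive: if a width-$k$ puzzle $P$ of size two is not a strong USP, then $P$ is not a USP. The entire argument can be assembled from machinery already in the excerpt; no new combinatorial work is needed.

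First, I would observe that for any puzzle $P$ with $|P| = 2$, the only subset $P' \subseteq P$ with $|P'| = 2$ is $P$ itself. Consequently, the call $\textsc{HeuristicDownwardClosed}(P, 2)$ returns NO exactly when $P$ itself fails to be a strong USP. So the hypothesis "$P$ is not a strong USP" is equivalent, for size-two puzzles, to the statement that the downward-closure heuristic at $s' = 2$ returns NO on $P$.

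Next I would invoke \autoref{lem:heuristic-equiv}, which tells us that $\textsc{HeuristicUniquePieces}(P) = \textsc{HeuristicDownwardClosed}(P, 2)$ for any puzzle. Combined with the previous step, this yields that $P$ not being a strong USP forces $\textsc{HeuristicUniquePieces}(P)$ to return NO, which by the design of the heuristic means there exist distinct rows $r_1, r_2 \in P$ and some element $e \in [3]$ such that $r_1$ and $r_2$ have their $e$-subrows in exactly the same columns. Finally, the contrapositive of \autoref{lem:unique-pieces} gives immediately that such a puzzle cannot be a USP, completing the contrapositive chain
\[
P \text{ not strong USP} \;\Longleftrightarrow\; P \text{ fails downward closure at } 2 \;\Longleftrightarrow\; P \text{ fails unique pieces} \;\Longrightarrow\; P \text{ not USP}.
\]

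There is essentially no difficult step here; the corollary is a direct repackaging of the size-two case of \autoref{lem:heuristic-equiv}. The only subtle point worth flagging explicitly in the write-up is that the equivalence with downward closure collapses to a statement about $P$ itself precisely because $|P| = s' = 2$ leaves no room for a proper subpuzzle—so the heuristic's verdict about $P$ and the USP/strong-USP distinction line up exactly in this degenerate regime.
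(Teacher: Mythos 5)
Your proposal is correct and follows essentially the same route as the paper, which presents the corollary as a direct consequence of the equivalence between the unique-pieces and downward-closure ($s'=2$) heuristics together with \autoref{lem:unique-pieces}; your only cosmetic difference is that you invoke the statement of \autoref{lem:heuristic-equiv} as a black box (plus the observation that for $|P|=2$ the downward-closure heuristic's verdict is about $P$ itself), whereas the paper reads the same chain off the proof of that lemma.
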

Since the unique pieces heuristic is equivalent to the downward-closure
heuristic for $s'=2$ and the running time of unique pieces is linear
in the puzzle size, $O(s \cdot k)$, and the running time of downward
closed is $O(s^2 \cdot k)$, we use the unique pieces heuristic in
place of downward closed for $s' = 2$.

\subsubsection{Greedy}
This heuristic attempts take advantage of \autoref{lem:verify-to-3dm}
and greedily search for a 3D matching for the instance $H_P$.  The
heuristic proceeds iteratively, determining the vertex of the first
part of the 3D matching instance with the least edges and randomly
selecting an edge of that vertex to put into the 3D matching.  If the
heuristic successfully constructs a 3D matching it returns NO
indicating that the input puzzle $P$ is not a strong USP.  If the
heuristic reaches a point were prior commitments have made the
matching infeasible, the heuristic starts again from scratch.  This
process is repeated some number of times before it gives up and
returns MAYBE.  In our implementation we use $s^2$ attempts because it
is similar to the running time of the reductions and it empirically
reduced the number of instances requiring full verification in the
domain of puzzles with $k = 6, 7, 8$ while not increasing the running
time by too much.  The greedy heuristic is formalized in
\autoref{alg:random-greedy}.

\begin{algorithm}[t]
  \caption{: Greedy Heuristic}
  \label{alg:random-greedy}
  \begin{algorithmic}[1]
  \Require{An $(s,k)$-puzzle $P$, and iteration bound $t$.}
  \Ensure{NO, if a witness is found for $P$ not being a strong USP,
    and MAYBE otherwise.}
  \Function{HeuristicGreedy}{$P$}
  \State{Construct 3D matching instance $H_P$.}
  \For{$i = 1$ to $t$}
    \For{$u \in P$}
      \State{$cts[u] = \sum_{v,w \in P} H_P(u,v,w)$.} \Comment{Number
        of edges incident vertex $u$.} 
    \EndFor
    \State{Let $U,V,W = \emptyset.$}
    \State{Let $m = 0.$} \Comment{Number of edges in matching.}
    \While{$m < s$} 
    \State{Select $u \in \condset{w \in \bar{U}}{cts[w] = \max_{v \in
          \bar{U}} cts[v]}$ uniformly at random.}
      \If{$cts[u] = 0$} break. \EndIf
      \State{Let $D = \condset{(v,w) \in \bar{V} \times
          \bar{W}}{H_P(u,v,w) = 1}$.}
      \State{Select $(v,w) \in D$ uniformly at random.}
      \For{$v' \in P$} \Comment{Update edge counts.}
        \For{$w' \in P$}
          \If{$(v',w') \in \bar{V} \times \bar{W}$ and $H_P(u,v',w') =
            1$}
          \State{$cts[u]\texttt{--}$.} \EndIf
          \If{$(v',w') \in \bar{U} \times \bar{W}$ and $H_P(v',v,w') =
            1$ and $v' \neq u$} \State{$cts[v']\texttt{--}$.} \EndIf
          \If{$(v',w') \in \bar{U} \times \bar{V}$ and $H_P(v',w',w) =
            1$ and $v' \not\in \set{u, v}$}
          \State{$cts[v']\texttt{--}$.} \EndIf
        \EndFor
      \EndFor
      \State{$U, V, W = U \cup \set{u}, V \cup \set{v}, W \cup
        \set{w}$.} \Comment{Add edge to matching.}
      \State{$m = m + 1.$}
      \EndWhile
      
    \If {$m \ge s$} \Return{NO}. \Comment{3D matching found so not
      SUSP, halt.} \EndIf
  \EndFor
  \State{\Return{MAYBE}.}
  \EndFunction
\end{algorithmic}
\end{algorithm}

The array $cts$ is used to store the number of edges $cts[u]$ that
remain associated with vertex $u$ along the first coordinate.  Much of
the algorithm is devoted to maintaining this invariant.  The sets
$U,V,W$ store the vertices along the three coordinates, respectively,
that have already been incorporated into the partial 3D matching.
Like in \autoref{alg:bi} we do not store the matching
itself, only the vertices involved.  The break at Line 10 triggers
when the partial 3D matching is a dead end and cannot be extended into
a full 3D matching.  The condition of Line 23 is true when a full 3D
matching has been constructed and causes the algorithm to return that
$P$ is not a strong USP.

The running time of this algorithm is $O(s^3 t + T_{3DM}(s,k))$, where
$T_{3DM}(s,k)$ is the time required to construct 3D matching instances
from $(s,k)$-puzzles.  This algorithm has the potential to be
considerably slower than the downward-closure heuristic, and in
practice we set $t = s^2$.  However, the main loop can terminate early
at Line 10 when it fails to extend the 3D matching, this permits the
expected time to much less than the worst case.  For a puzzle $P$ that
is a strong USP, the heuristic takes the full $\Omega(s^3 t +
T_{3DM}(s,k))$ time.

Compared to the downward-closure and unique pieces heuristics this
heuristic is much less efficient.  As a result we only run it when
when the other heuristics have failed.  See
\autoref{subsec:performance} for a comparison of effectiveness these
heuristics in our experiments.

\subsection{Hybrid Algorithm}
\label{subsec:hybrid}

Our final verification algorithm (\autoref{alg:hybrid}) is a hybrid of
several exact algorithms and heuristics.  The size thresholds for
which algorithm and heuristic to apply were determined experimentally
for small $k$ and are focused on the values where our strong USP
search algorithms are tractable $k \le 6$ (or nearly tractable $k \le
8$).  We decide to run both of the reductions to SAT and IP in
parallel because it is not clear which algorithm performs better in
general. Since verification halts when either algorithm completes, the
wasted effort is within a factor of two of what the better algorithm
could have done alone.  We also chose to do this because we
experimentally observed that there were many instances that one of the
algorithms struggled with that the other did not---this resulted in a
hybrid algorithm that out performed the individual exact algorithms on
average. We show in \autoref{subsec:performance} that our hybrid
algorithm and heuristics perform well in practice at quickly verifying
strong USPs for small width $k$.  Further, \autoref{sec:choice-sat}
contains a discussion of the relative performance of the SAT and IP
approaches on different instance types from our benchmark experiments.

\begin{algorithm}[t]
  \caption{: Hybrid Verification}
  \label{alg:hybrid}
\begin{algorithmic}[1]
  \Require{An $(s,k)$-puzzle $P$.}
  \Ensure{YES, if $P$ is a strong USP, and NO otherwise.}
  \Function{Verify}{$P$}
  \If{$s \le 2$} \Return{\Call{VerifyBruteForce}{$P$}.} \EndIf
  \State{Return result if \Call{HeuristicUniquePieces}{$P$} is not MAYBE.}
  \If{$s \le 7$} \Return{\Call{VerifyDynamicProgramming}{$P$}.} \EndIf
  \State{Return result if \Call{HeuristicDownwardClosed}{$P, 3$} is
    not MAYBE.}  
  \State{Return result if \Call{HeuristicGreedy}{$P$} is not MAYBE.}
  \State{Run \Call{VerifySAT}{$P$} and \Call{VerifyIP}{$P$} in
    parallel and return first result.}
  \EndFunction
\end{algorithmic}
\end{algorithm}

\section{Searching for Strong USPs}
\label{sec:search}

With a practical verification algorithm in hand, we consider the
problem of searching for large strong USPs.  Because the set of strong
USPs is downward closed, a natural search strategy is: Start with the
empty set and repeatedly consider adding rows while maintaining the
strong-USP property.  However, while this strategy will lead to a
maximal-size strong USP, it is not guaranteed to produce a
maximum-size strong USP.  This is because the set of strong USPs does
not form a matroid, rather it is only an independence system (c.f.,
e.g., \cite{oxl06}).

In particular, (i) the empty puzzle is a strong USP and (ii) the set
of strong USP are downward closed by \autoref{lem:downward-closed}.
The final property required to be a matroid, the augmentation
property, requires that for every pair of strong USPs $P_1, P_2$ with
$|P_1| \le |P_2|$ there is a row of $r \in P_2 \backslash P_1$ such
that $P_1 \cup \set{r}$ is also a strong USP.  For a simple
counterexample consider the strong USPs $P_1 = \set{32}$ and $P_2 =
\set{12, 23}$.  Using \autoref{lem:unique-pieces}, we see that neither
$P_1 \cup \set{12} = \set{12, 32}$ nor $P_1 \cup \set{23} = \set{23,
  32}$ are strong USPs, and hence the augmentation property fails.
One consequence is that na\"{i}ve greedy algorithms will likely be
ineffective for finding maximum-size strong USPs.  Furthermore, we do
not currently know of an efficient algorithm that can take a strong
USP $P$ and determine a row $r$ such that $P \cup \set{r}$ is a strong
USP.

Despite that, we have had some success in applying general-purpose
tree-search techniques with pruning based on the symmetries of strong
USPs together with our practical verification algorithm to construct
maximum-size strong USPs for small $k$.

\subsection{Puzzle Symmetry}
\label{subsec:symmetry}

Since puzzles are defined as sets of rows, the ordering of the rows of
a puzzle $P$ does not affect the SUSP property.  Similarly, but
slightly less obviously, the SUSP property is invariant to reordering
the columns of the puzzle, because the required existential condition
$\exists c \in [k] \mathrm{~st.~} (...)$ from \autoref{def:strong-USP}
is independent of the ordering of the columns.  Lastly, the alphabet
$[3]$ typically used to represent the elements of a puzzle is
completely arbitrary, any set of three distinct values would suffice.
These values are not interpreted mathematically, aside from their
convenience in expressing the SUSP definition concisely.  This logic
can be formalized into the following lemma.
\begin{lemma}
  \label{lem:SUSP:invariance}
  Let $\rho \in \Sym{[k]}, \delta \in \Sym{[3]}$.  A \puz{s}{k}
  $P$ is a strong USP iff $ \condset{(\delta(r_{\rho(c)}))_{c \in
      [k]}}{r \in P}$ is a strong USP.
\end{lemma}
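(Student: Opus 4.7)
The plan is to establish a correspondence between triples of permutations of $P$ and triples of permutations of $P'$ such that the ``exactly two'' witness condition of \autoref{def:strong-USP} is preserved in both directions. Because the claim is a biconditional and the roles of $(P,\rho,\delta)$ and $(P',\rho^{-1},\delta^{-1})$ are symmetric, it suffices to prove one direction; the other follows by reapplying the same argument.

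First, I would define the natural bijection $\phi : P \to P'$ by $\phi(r)_c = \delta(r_{\rho(c)})$ for $c \in [k]$; this is a bijection of sets because $\rho$ and $\delta$ are permutations. Given a triple $\pi_1', \pi_2', \pi_3' \in \Sym{P'}$ that is not uniformly the identity, I would define a corresponding triple on $P$ by $\pi_i = \phi^{-1} \circ \pi_{\delta(i)}' \circ \phi$ for $i \in [3]$. Since $\delta$ is a permutation of $[3]$ and conjugation is injective, the triple $(\pi_1,\pi_2,\pi_3)$ is not uniformly the identity either, so the strong USP property of $P$ yields $r \in P$ and $c \in [k]$ with exactly two indices $i \in [3]$ satisfying $(\pi_i(r))_c = i$.

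The key computation is to show that $r' = \phi(r)$ and $c' = \rho^{-1}(c)$ witness the corresponding condition for the primed triple. Unwinding the definitions gives
\[
(\pi_{\delta(i)}'(r'))_{c'} \;=\; (\phi(\pi_i(r)))_{c'} \;=\; \delta\bigl((\pi_i(r))_{\rho(c')}\bigr) \;=\; \delta\bigl((\pi_i(r))_c\bigr),
\]
so setting $j = \delta(i)$ we have $(\pi_j'(r'))_{c'} = j$ if and only if $(\pi_{\delta^{-1}(j)}(r))_c = \delta^{-1}(j)$. Thus the set of $j \in [3]$ satisfying the primed condition is the image under $\delta$ of the set of $i$'s satisfying the original condition, and since $\delta$ is a bijection on $[3]$ this set has the same cardinality, namely exactly two. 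Hence $P'$ satisfies \autoref{def:strong-USP}, completing one direction.

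The main obstacle is bookkeeping: getting the $\delta$ and $\delta^{-1}$ to land in the correct positions in the correspondence $\pi_i \leftrightarrow \pi_{\delta(i)}'$. The rest is routine because column permutation contributes only a harmless relabeling of the existentially quantified column $c$, and the alphabet permutation $\delta$ preserves cardinality of subsets of $[3]$ so the word ``exactly two'' in \autoref{def:strong-USP} is preserved under the correspondence. I would conclude by noting that applying the same argument to $P'$ with $(\rho^{-1}, \delta^{-1})$ recovers $P$, giving the reverse implication.
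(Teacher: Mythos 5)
Your proposal is correct and takes essentially the same route as the paper, whose proof is simply the one-line observation that the claim follows immediately from Definitions~\ref{def:puzzle} and~\ref{def:strong-USP}; your write-up is the careful unwinding of that observation via the bijection $\phi$ and the conjugated triple $\pi_i = \phi^{-1}\circ\pi'_{\delta(i)}\circ\phi$, with the key computation done correctly. One small wording fix: condition (i) of \autoref{def:strong-USP} is $\pi_1=\pi_2=\pi_3$, not ``all identity,'' so your hypothesis should be ``not all equal'' rather than ``not uniformly the identity'' (a triple that is constant but non-identity needs no witness, and your argument cannot supply one for it); your conjugation argument preserves ``not all equal'' verbatim, so nothing else changes.
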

\begin{proof} Follows immediately from \autoref{def:puzzle} and
  \autoref{def:strong-USP}. \qed
\end{proof}
This lemma implies that the SUSP property is invariant with respect to
these kinds of puzzle transformations.  We call two puzzles $P, P'$
that are related in this way \emph{isomorphic}, and use the notation
$P \isom P'$ to denote this.  The relation $\isom$ is an equivalence
relation, because permutations are invertable, and so it partitions
the set of puzzles into equivalence classes.

This notion of isomorphism is naturally related to the same notion in
graphs. For each \puz{s}{k} $P$ we can define a colored, undirected graph
$G_P$.  This graph consists of vertices that are partitioned into
four sets of different colors: $V = \set{row_r}_{r \in [s]} \sqcup
\set{col_{c}}_{c \in [k]} \sqcup \set{e_i}_{i \in [3]} \sqcup
\set{v_{r, c}}_{(r, c) \in [s] \times [k]}$.  There are $s + k + 3 + s
\cdot k$ vertices in $G_P$.  The first three parts are vertices
representing the rows and columns of $P$, and the elements of
$[3]$, respectively, and the fourth part are vertices for each
of the $s \cdot k$ cells in the $P$.  The edge relation of $G_P$ is
straightforward: Each vertex $v_{r, c}$ is connected to three vertices
corresponding to the row, columns and element that the cell indexed
$(r,c)$ contains in $P$.  In particular, the three edges attached to
$v_{r,c}$ are $(v_{r,c}, row_r), (v_{r,c}, col_c), (v_{r,c},
elt_{P(r,c)})$.  In total, $G_P$ has $3 \cdot s \cdot k$ edges.
Because the vertex sets for rows, columns, and elements are each
uniquely colored and each cell of $P$ is connected to vertices
representing its row, column, and element, the automorphisms of $G_P$
are in 1-1 correspondence to the automorphisms of $P$ under
permutations of rows, columns, and elements.  This implies that for two
\puzs{s}{k} $P, P'$, if $G_P \isom G_{P'}$ then there exists
permutations of the rows, columns, and elements of $P$ which results
in $P'$.  Further by \autoref{lem:SUSP:invariance}, if $G_P \isom
G_{P'}$, then $P \isom P'$, and $P$ is an SUSP iff $P'$ is an SUSP.

\subsection{Symmetry-Pruned Tree Search}
\label{subsec:tree-search}

A natural way to search for strong USPs is based on breadth-first
search and uses the fact that strong USP are downward closed
(\autoref{lem:downward-closed}): To find the largest possible
width-$k$ strong USP, (i) start with all possible first rows -- the $3^k$
\puzs{1}{k}, (ii) attempt to extend the resulting puzzles with all
possible rows keeping only the puzzles that are strong USPs and which are not
isomorphic to the strong USPs that have been seen before to form the new search
frontier, and (iii) repeat Step (ii) until the search frontier is
empty.

To ensure the algorithm does not revisit isomorphic puzzles, we use
canonical graph representations $[G_p]$ of the puzzle graphs $G_P$.
A canonical graph representation is a binary encoding of a graph with
the property that for any two graphs $G_1, G_2$, $[G_1] = [G_2]$ iff
$G_1 \isom G_2$ (c.f., e.g., \cite{nauty}).  As the search algorithm
runs we record the set $I$ of canonical graph representations $[G_P]$
of each distinct puzzle $P$ that has been added to the search
frontier.  Each time a puzzle $P'$ is considered for being added to
the search frontier we first check whether its canonical graph
representation $[G_{P'}] \in I$, if it is, we do not add $P'$ to the
frontier.  The use of canonical representations of puzzles
dramatically shrinks the search space by searching from $[P]$ rather
than every $P' \isom P$ and by not allowing duplicates of $[P]$ to be
enqueued.  This algorithm SP-BFS is formalized in \autoref{alg:sp-bfs}.


\begin{algorithm}[t]
  \caption{: Symmetry-Pruned Breadth-First Search}
  \label{alg:sp-bfs}
\begin{algorithmic}[1]
  \Require{An integer $k \ge 0$.}
  \Ensure{The number $b$, which is the size of the largest
    width-$k$ strong USP.}

  \Function{SP-BFS}{$k$}

  \State{Let $Q$ be an empty queue.}
  \State{Let $I$ be an empty set.}
  \State{Let $b = 0$.}
  \State{\Call{enqueue}{$Q, \emptyset$}.}

  \While{$Q$ is not empty}
  \State{$P = \Call{dequeue}{$Q$}$.}
  \For{$r \in [3]^k \backslash P$}
      \State{Let $P' = P \cup \set{r}$.}
      \If{\Call{Verify}{$P'$} and $[G_P'] \not\in I$}
        \State{\Call{enqueue}{$Q, P'$}.}
        \State{$I = I \cup \set{[G_P']}$.}
        \State{$b = |P'|$.}
      \EndIf
    \EndFor
  \EndWhile

  \State{\Return{$b$}.}

  \EndFunction
\end{algorithmic}
\end{algorithm}

We argue the correctness of this algorithm.

\begin{lemma}
  \label{lem:sp-bfs}
For $k \in \mathbb{N}$, \Call{SP-BFS}{$k$} returns the maximum integer
$s$ for which there exists an \susp{s}{k}.
\end{lemma}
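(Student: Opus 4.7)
The plan is to establish two inequalities on the returned value $b$: that $b \le s^*$ and $b \ge s^*$, where $s^*$ denotes the size of a largest width-$k$ strong USP. Termination is immediate: every row is drawn from the finite set $[3]^k$, and the guard $[G_{P'}] \notin I$ together with the update $I = I \cup \set{[G_{P'}]}$ ensures that no two enqueued puzzles share the same canonical graph, bounding the total number of enqueues. For the upper bound $b \le s^*$, observe that $b$ is only updated immediately after \textsc{Verify}$(P')$ returns YES, so every value ever assigned to $b$ equals $|P'|$ for some actual strong USP $P'$, and is therefore at most $s^*$.

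For the lower bound $b \ge s^*$, fix a largest strong USP $P^* = \set{r_1, \ldots, r_{s^*}}$ under some arbitrary ordering of its rows, and let $P^*_i = \set{r_1, \ldots, r_i}$. By \autoref{lem:downward-closed} each prefix $P^*_i$ is itself a strong USP. I will argue by induction on $i$ that, at some point during execution, a puzzle isomorphic to $P^*_i$ is enqueued (equivalently, its canonical graph is inserted into $I$). The base case $i=0$ holds because the algorithm enqueues $\emptyset$ at initialization. For the inductive step, let $Q_i$ be an already-enqueued puzzle with $Q_i \isom P^*_i$, and consider the iteration in which $Q_i$ is dequeued. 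By \autoref{lem:SUSP:invariance} the isomorphism is witnessed by some $\rho \in \Sym{[k]}$ and $\delta \in \Sym{[3]}$ with $Q_i = \condset{(\delta(r_{\rho(c)}))_{c \in [k]}}{r \in P^*_i}$. Define $r' = (\delta((r_{i+1})_{\rho(c)}))_{c \in [k]}$. Then $r' \notin Q_i$ (since $r_{i+1} \notin P^*_i$ and $r \mapsto (\delta(r_{\rho(c)}))_{c \in [k]}$ is a bijection on $[3]^k$), and $Q_i \cup \set{r'} \isom P^*_{i+1}$, so by \autoref{lem:SUSP:invariance} the extension is a strong USP and \textsc{Verify} returns YES. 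If $[G_{Q_i \cup \set{r'}}] \notin I$, then $Q_i \cup \set{r'}$ is enqueued and $b$ is updated to $i+1$; otherwise some earlier $Q_{i+1} \isom P^*_{i+1}$ was already enqueued. In either case, the inductive hypothesis holds at $i+1$.

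Finally, because the queue is FIFO and each extension increases the puzzle size by exactly one, puzzles are dequeued in non-decreasing order of size, so the values assigned to $b$ form a non-decreasing sequence. The terminal value of $b$ therefore equals the largest size of any enqueued puzzle, which by the induction above is at least $s^*$. The main subtlety is showing that symmetry pruning never eliminates a necessary exploration branch; the induction handles this because whenever pruning skips an isomorphic copy, an equivalence-class representative is already in the search state, and by \autoref{lem:SUSP:invariance} that representative admits an analogous row extension that stays on the path toward a puzzle isomorphic to $P^*$.
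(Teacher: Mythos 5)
Your proof is correct and follows essentially the same route as the paper: downward closure (\autoref{lem:downward-closed}) makes every strong USP reachable by single-row extensions from $\emptyset$, and \autoref{lem:SUSP:invariance} guarantees that symmetry pruning only skips puzzles whose equivalence class already has an enqueued representative, which then admits the corresponding row extension. The differences are cosmetic: you run a direct induction along the prefixes of one maximum SUSP where the paper argues by minimal counterexample over all equivalence classes, and you spell out bookkeeping the paper leaves implicit (soundness of \textsc{Verify} giving $b \le s^*$, and FIFO order making the enqueued sizes, hence $b$, non-decreasing).
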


\begin{proof}
Ignoring the pruning that $I$ performs for a moment, it is routine to
argue that SP-BFS behaves like a generic breadth-first search
algorithm over the tree of all strong USPs.  This is because of the
downward-closure property of strong USP
(\autoref{lem:downward-closed}), which makes any strong USP $P$
reachable from the trivial strong USP $\emptyset$ using a series of
row inclusions.  \Call{SP-BFS}{$k$} results in an exhaustive search of
all strong USPs of width $k$ and return the maximum size $b$ of such
SUSPs.

We argue that when considering the pruning that $I$ contributes
to, \Call{SP-BFS}{$k$} enqueues exactly one element of each
equivalence class of puzzles that are SUSPs.  Then, as a consequence
of \autoref{lem:SUSP:invariance}, the algorithm must explore every
equivalence class of width-$k$ SUSPs. Hence, it explores an equivalence
class with SUSPs of maximum size and subsequently returns that size,
which is the expected output.

To complete the argument and show that the symmetry pruned search
covers the entire search space of equivalence classes, suppose, for
the sake of contradiction, that there is some smallest $s$ such that
there is an \puz{s}{k} $P$ that does not have its equivalence class
$[P]$ searched.  We know that $s > 1$, because the algorithm starts by
considering all possible \puzs{1}{k}.  Let $P'$ be the \puz{s-1}{k}
created from $P$ by removing one of its rows $r$, $P'$ has as least
one row because $s > 1$.  By hypothesis, the equivalence class of
$[P']$ has been visited by SP-BFS because $P'$'s size is $s - 1 < s$.
Consider $[P]$ and remove the row that corresponded to $r$ to form
$[P]'$.  It must be the case that $[P'] \isom [P]'$.  This isomorphism
extends to $[P]$ in that there must be a row $r'$ such that $([P']
\cup \set{r'}) \isom [P]$, where $r'$ is replaces the row remove from
$[P]$.  Therefore, since $[P']$ is searched, the algorithm must
consider all possible rows to extend by, including $r'$.  This is
means that the equivalence class of $[P]$ is searched, a contradicting
our assumption.  Therefore every equivalence class of SUSPs is
searched by SP-BFS.  \qed
\end{proof}

This approach reduces the size of the search space, improving both the
running time of the search and the space required to keep track of the
frontier puzzles. The worst case running time of SP-BFS is $O(3^k
\cdot \#EQUIV(k) \cdot (T_{\textsc{Verify}}(s_k+1,k) +
T_{\textsc{Canonize}}(s_k, k)),$ where $\#EQUIV(k)$ is the number
equivalence classes of strong USP of width $k$,
$T_{\textsc{Verify}}(s_k+1,k)$ is the time to verify the maximum size
\puzs{s_k+1}{k} examined by the algorithm, and
$T_{\textsc{Canonize}}(s_k,k)$ is the time to compute the canonical
graph representation of each puzzle $P$ considered by the algorithm
(assuming $T_{\textsc{Verify}}$ and $T_{\textsc{Canonize}}$ are
monotone in their parameters).

See \autoref{subsec:symmetry-results} for the experimental results of
running SP-BFS and a discussion of implementation issues.

\section{Upper Bounds}
\label{sec:upper-bounds}

Although the main focus of this research line is to construct
sufficiently large strong USP that would imply faster matrix
multiplication algorithms, our techniques and approach can also be
applied to search for tighter upper bounds on the size of strong USP.  We
describe several SUSP-size upper bounds in this section.

\paragraph{$\omega$ Bound.}
Prior work explicitly discusses bounds on the capacity of infinite
families of USP (c.f., \cite[Lemma 3.2, Theorem 3.3]{cksu05}).  Since
every SUSP is a USP, these bounds also apply to SUSP and can be
restated to apply to individual puzzles.  The first bound, which we
denote as the ``$\omega$ bound'', results from (i)
\autoref{lem:omega}, which is monotone non-increasing for fixed $k$,
and (ii) the fact that $\omega \ge 2$. To compute this bound we
evaluate the inequality of \autoref{lem:omega} on increasingly large
$s$ until just before the consequence implies $\omega < 2$ which is in
contradiction with $\omega \ge 2$.

\paragraph{Unique Pieces Bound.}
The second bound, which we denote as the ``unique pieces bound'',
following directly from \autoref{lem:unique-pieces}.  Since that lemma
requires that each row of a (strong) USP have a unique ones, twos, and
threes piece, the total number of rows in a strong USP cannot be more
than $2^k$.

\paragraph{USP Bound.}
The third bound, which we denote as the ``USP bound'', results from
the proof of \cite[Lemma 3.2]{cksu05}.  Although not spelled out in
that article, the proof relies on the following subclaim that
directly bounds $s$ as a function of $k$.
\begin{proposition}
\label{prop:usp-bound}
Let $P$ be a $(s, k)$-USP, then
$$s \le \sum_{c_1 = 0}^k \sum_{c_2 = 0}^{k - c_1} \min\left(\binom{k}{c_1},
\binom{k}{c_2}, \binom{k}{k - (c_1 + c_2)}\right) = O\left(k^2 \cdot \left(\frac{3}{2^{2/3}}\right)^k\right).$$
\end{proposition}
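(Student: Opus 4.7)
The approach is to partition the rows of $P$ by symbol type, bound each block using \autoref{lem:unique-pieces}, and then estimate the resulting double sum with Stirling's approximation.

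For each pair $(c_1, c_2)$ with $c_1, c_2 \ge 0$ and $c_1 + c_2 \le k$, let $N_{c_1, c_2}$ count the rows of $P$ that contain exactly $c_1$ ones and exactly $c_2$ twos (and hence exactly $c_3 := k - c_1 - c_2$ threes). These types partition $P$, so $s = \sum_{c_1, c_2} N_{c_1, c_2}$, and it suffices to bound each $N_{c_1, c_2}$ by $\min(\binom{k}{c_1}, \binom{k}{c_2}, \binom{k}{c_3})$. To see this, observe that every row with exactly $c_1$ ones has a ones-subrow determined by the $c_1$-subset of $[k]$ recording the positions of its $1$'s. By \autoref{lem:unique-pieces} applied to $e = 1$, these subsets must be distinct across all rows of $P$ (since $P$ is a USP), so the total number of rows of $P$ with exactly $c_1$ ones is at most $\binom{k}{c_1}$; in particular $N_{c_1, c_2} \le \binom{k}{c_1}$. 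The symmetric arguments applied to the twos- and threes-subrows give $N_{c_1, c_2} \le \binom{k}{c_2}$ and $N_{c_1, c_2} \le \binom{k}{c_3}$. Taking the minimum and summing over $(c_1, c_2)$ yields the first inequality.

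For the asymptotic estimate, since $c_1 + c_2 + c_3 = k$, at least one of $c_1, c_2, c_3$ is at most $\lfloor k/3 \rfloor \le k/2$. Using monotonicity of $\binom{k}{\cdot}$ on $[0, k/2]$, this forces $\min_i \binom{k}{c_i} \le \binom{k}{\lfloor k/3 \rfloor}$. Stirling's approximation gives $\binom{k}{\lfloor k/3 \rfloor} = 2^{k\,H(1/3) + O(\log k)}$, where $H$ is the binary entropy and $H(1/3) = \log_2 3 - 2/3 = \log_2(3/2^{2/3})$, so each summand is $O((3/2^{2/3})^k)$ up to polynomial factors. The double sum has at most $(k+1)^2 = O(k^2)$ terms, which yields the claimed $O(k^2 \cdot (3/2^{2/3})^k)$ bound. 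The main subtlety is bookkeeping the Stirling constants so that the $\log_2(3/2^{2/3})$ exponent emerges cleanly; the combinatorial reduction itself is immediate from the unique-pieces property.
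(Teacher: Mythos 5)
Your proof is correct and is essentially the argument the paper intends: the paper supplies no proof of its own, deferring to the proof of \cite[Lemma 3.2]{cksu05}, and that argument is exactly your decomposition of the rows by their counts of ones, twos, and threes, with \autoref{lem:unique-pieces} forcing the $e$-subrows to be distinct and hence bounding each class by the minimum of the three binomial coefficients. One minor tightening: rather than Stirling with an $O(\log k)$ slack (which, read literally, leaves an extra $\mathrm{poly}(k)$ factor beyond the stated $k^2$), use the entropy bound $\binom{k}{j} \le 2^{kH(j/k)}$ with $H(1/3) = \log_2 3 - 2/3$, so each summand is at most $(3/2^{2/3})^k$ outright and the $O\bigl(k^2 \cdot (3/2^{2/3})^k\bigr)$ estimate follows with no extra bookkeeping.
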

Note that the USP bound is asymptotically tighter than the unique
pieces bound as $\frac{3}{2^{2/3}} \approx 1.8899 < 2$.

\paragraph{Clique Bound.}
The fourth bound, which we denote as the ``clique bound'', results
from the fact that SUSPs are downward closed
(\autoref{lem:downward-closed}).  In particular if $P$ is an SUSP,
then for every $P' \subseteq P$ with $2$ rows must also be an SUSP.
Fix $k \in \mathbb{N}$ and consider a graph $G_k$ whose vertices
correspond to the possible rows of a width-$k$ puzzle, i.e., strings
in $[3]^k$, and where there is an edge between $r_1, r_2 \in [3]^k$ if
$\set{r_1, r_2}$ is an SUSP.  Observe that by downward closure, each
\susp{s}{k} corresponds to a clique of size $s$ in $G_k$.  This
approach naturally generalizes from the Clique problem to
$h$-HypergraphClique problem where the graph $G_k^h$ consists the same
$3^k$ vertices as $G_k = G_k^2$, but instead has the arity-$h$ edges
$\set{r_1, r_2, \ldots, r_h}$ which are \susps{h}{k}.
\newcommand\SUSP{\mathrm{SUSP}}
\begin{proposition}
\label{prop:clique-bound}
Let $P$ be an \susp{s}{k} and $2 \le h \le s$. Then for $$G^h_k =
\langle V = [3]^k, E = \condset{ P' \sse V }{ P' \text{ is a strong USP and
  } |P'| = h}\rangle,$$ $(G^h_k, s) \in h\text{-HypergraphClique}$.
\end{proposition}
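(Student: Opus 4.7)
The plan is to derive this proposition as an essentially immediate consequence of downward closure (\autoref{lem:downward-closed}). The goal is to exhibit $P$ itself as a clique of size $s$ in the hypergraph $G^h_k$, which by the definition of $h$-HypergraphClique is exactly what is required.

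First, I would verify that $P \sse V$: by \autoref{def:puzzle}, every row of $P$ is an element of $\ordset{3}^k = V$, so $P$ is a set of $s$ vertices of $G^h_k$. Second, I would invoke \autoref{lem:downward-closed} to conclude that every subset $P' \sse P$ is itself a strong USP. In particular, for every $P' \sse P$ with $|P'| = h$, the set $P'$ is a \susp{h}{k}, and so by the definition of $E$, $P' \in E$ is a hyperedge of $G^h_k$. Since the assumption $2 \le h \le s$ guarantees that such $h$-subsets exist and that $s$ is a valid clique size, this shows that every $h$-subset of $P$ is a hyperedge, which is precisely the condition for $P$ to be a size-$s$ clique in the $h$-uniform hypergraph $G^h_k$. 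Thus $(G^h_k, s) \in h\text{-HypergraphClique}$.

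There is no real obstacle here: the content lies entirely in the prior downward-closure lemma, and the proposition is essentially a restatement of downward closure in the language of hypergraph cliques. The only thing worth spelling out carefully is the quantifier structure, namely that cliques in $h$-uniform hypergraphs are defined by requiring \emph{every} $h$-subset to be an edge, which matches exactly the fact that \emph{every} $h$-subset of an SUSP is again an SUSP. I would keep the proof to two or three sentences and close with \texttt{\textbackslash qed}.
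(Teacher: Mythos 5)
Your proof is correct and follows essentially the same route as the paper, which justifies the proposition informally in the surrounding text: by downward closure (\autoref{lem:downward-closed}), every $h$-subset of the SUSP $P$ is itself an SUSP and hence a hyperedge of $G^h_k$, so $P$ is a size-$s$ clique. Nothing is missing; the paper treats this as an immediate consequence of downward closure, exactly as you do.
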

Therefore, the size of a maximum hypergraph clique in $G^h_k$ is an
upper bound of size of width-$k$ SUSP.  We use ``clique bound'' to
denote the specific instantiation of this bound for $h = 2$.

\paragraph{Exhaustive Bound.}
For fifth bound, which we denote as the ``exhaustive bound'', we
consider the results of \autoref{alg:sp-bfs} when run in the domain of
$k$ where the full search space can be feasibly explored.  Because
these bounds are based on exhaustive search they are inherently tight.

\paragraph{Downward-Closure Bound.}
The final bound we consider follows from the downward-closure property
of SUSPs.
\begin{proposition}
Let $P$ be an \susp{s}{k} with $k > 1$, then there exists an
\susp{\lceil \frac{s}{3}\rceil}{k - 1}.
\end{proposition}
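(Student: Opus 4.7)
The plan is to use a pigeonhole argument on the last column of $P$ together with the downward-closure behavior of strong USPs under column restriction. Let $P$ be an \susp{s}{k}. For each $e \in [3]$, set $P_e = \condset{r \in P}{r_k = e}$; these three sets partition $P$, so by pigeonhole some $e^* \in [3]$ satisfies $|P_{e^*}| \ge \lceil s/3 \rceil$. Let $P'$ be the \puz{|P_{e^*}|}{k-1} obtained by deleting column $k$ from each row of $P_{e^*}$. Since the rows of $P$ are pairwise distinct and rows in $P_{e^*}$ already agree in column $k$, restricting to columns $[k-1]$ preserves distinctness, so $|P'| = |P_{e^*}| \ge \lceil s/3 \rceil$.

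The heart of the argument is to show $P'$ is a strong USP. Fix $\pi'_1,\pi'_2,\pi'_3 \in \Sym{P'}$ not all equal. Using the natural bijection between $P'$ and $P_{e^*}$, lift each $\pi'_i$ to a permutation $\pi_i \in \Sym{P}$ that acts as $\pi'_i$ on $P_{e^*}$ and as the identity on $P \setminus P_{e^*}$; these lifted permutations are still not all equal. Because $P$ is a strong USP, \autoref{def:strong-USP} supplies a row $r \in P$ and a column $c \in [k]$ such that exactly two of $(\pi_1(r))_c = 1$, $(\pi_2(r))_c = 2$, $(\pi_3(r))_c = 3$ hold.

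The remaining work is a short case analysis to force the witness into $P'$. If $r \notin P_{e^*}$, then $\pi_i(r) = r$ for all $i$, so $(\pi_i(r))_c$ is a single fixed value and at most one of the three equalities can hold, contradicting "exactly two." Hence $r \in P_{e^*}$. Next, if $c = k$, then each $\pi_i(r) \in P_{e^*}$, so $(\pi_i(r))_k = e^*$ for every $i$; again only one of the three equalities can hold, a contradiction. Therefore $c \in [k-1]$, and since $\pi_i(r) \in P_{e^*}$ agrees in columns $[k-1]$ with $\pi'_i(r|_{[k-1]})$, the pair $(r|_{[k-1]}, c)$ witnesses the strong USP condition for $P'$.

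The main obstacle is not difficulty but bookkeeping: one must be careful that the lift from $\Sym{P'}$ to $\Sym{P}$ is well-defined (it is, because $\pi'_i$ permutes $P_{e^*}$ via the bijection, and fixing the complement extends any permutation of a subset to one of the whole set) and that the witness extraction respects the bijection between $P'$ and $P_{e^*}$ in columns $[k-1]$. Once those identifications are clear, the two contradictions above do all the real work, and the size bound follows immediately from pigeonhole.
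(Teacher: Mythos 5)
Your proof is correct and follows essentially the same route as the paper: pigeonhole on a column to find an element appearing in at least $\lceil s/3 \rceil$ rows, restrict to those rows, and delete the now-constant column. The only difference is that where the paper cites downward closure (\autoref{lem:downward-closed}) together with the (unproved but easy) invariance of the strong-USP property under removal of a constant column, you verify that combined step directly by lifting permutations of the smaller puzzle to permutations of $P$ fixing the complement and forcing the witness into the restricted rows and columns $[k-1]$ --- a sound, self-contained substitute for those two facts.
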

\begin{proof}
Fix any $c \in [k]$ and consider the $c^{th}$ column of $P$, then, by
averaging, there must be an element of $e \in [3]$ that appears at
least $\lceil \frac{s}{3} \rceil$ times in that column.  Let $P'
\subset P$ be the subpuzzle of $P$ whose rows have $e$ in the $c^{th}$
column.  $P'$ is a strong USP, because $P$ is a strong USP and strong
USPs are downward closed (\autoref{lem:downward-closed}).  Form $P''$
by removing the $c^{th}$ column of $P'$.  $P''$ is a strong USP,
because $P'$ is a strong USP and the strong-USP property is invariant
to addition or removal of constant columns.  By construction, $P''$ is
a \susp{\lceil \frac{s}{3}\rceil}{k - 1}. \qed
\end{proof}
This bound is not as independently applicable like the others, but it
can lift upper bounds of $s \le u$ at $k$ to $s \le 3 u$ at $k+1$.

See \autoref{subsec:results-upper-bounds} for the results of
evaluating the above bounds for small width and a discussion of issues
involved in concretely calculating them.

\section{Implementation}
\label{sec:implementation}

We implemented our verification algorithms, heuristics, and search
algorithms, along with various utilities and appropriate
datastructures to represent underlying information such as puzzles in
C++. The source code for our implementation is available under a MIT
License at \url{https://bitbucket.org/paraphase/matmult}.

We use a number of external libraries with subroutines that are key to
the functioning of our algorithms.  Our IP-based verifier and Clique
bound calculator both use the commercial, closed-source mixed-integer
programming solver Gurobi to solve the integer programs produced by
our reductions \cite{gurobi}.  Our SAT-based verifier uses, by
default, the \texttt{kissat-sc2021-sat} solver from the 2021 SAT
Competition by A.~Biere, M.~Fleury, and M.~Heisinger \cite[page
  10]{SAT2021}.  Note that the conference version of this article used
the MapleCOMSPS solver---see \autoref{sec:choice-sat} for a discussion
of solver benchmarks, comparisons, and choice.  We implemented
\autoref{alg:sp-bfs} using our hybrid verifier, and the graph
automorphism library Nauty \cite{nauty} as a subroutine to perform the
required graph canonization on $G_P$.  The original versions of our
SP-BFS implementation targeted a high-performance computing cluster
environment, because our brute force and dynamic programming
implementations were not efficient enough.  Subsequent improvements to
our verification algorithms made this unnecessary.  Despite this, our
SP-BFS implementation is still in MPI and uses a MapReduce framework
\cite{mrmpi} to maintain a distributed search frontier.

Our code base also contains multiple implementations of
depth-first-search-inspired algorithms for locating strong USPs.
These algorithms use our hybrid verification implementation and puzzle
symmetry pruning technique discussed in \autoref{sec:search}.  For
brevity and to keep this article focused on strong-USP verification,
we elect not to discuss these algorithms and defer them to a
subsequent article.  That said, some of the concrete puzzles we found
and report in the next section were generated by such algorithms.
These puzzles once found were experimentally verified as strong USPs
using the techniques discussed in detail in \autoref{sec:verify}.

\section{Experimental Results}
\label{sec:results}

Our experimental results come in several flavors for small-constant
width $k$: (i) constructive lower bounds on the maximum size of
width-$k$ strong USPs witnessed by found puzzles, (ii) upper bounds on
the maximum size of width-$k$ strong USPs, (iii) the number of SUSPs
and SUSP equivalence classes for width $k$, (iv) experimental data
comparing the run times of our verification algorithms and
distinguishing likelihood of our heuristics, and (v) a benchmark data
set of SAT/UNSAT instances that we use to compare the effectiveness of
competitive SAT solvers as subroutines for the SAT-based part of our
verifier.

All of the results in this section were produced by running our
algorithm implementations on the same Ubuntu 20.04 PC with a 3.00 GHz
Intel Core i9-10980XE CPU and 128 GB of RAM.

\subsection{New Upper and Lower Bounds on the Size of Strong USPs}
\label{subsec:usps_found}
\label{subsec:results-upper-bounds}
\label{subsec:symmetry-results}

\paragraph{New Lower Bounds.}
\autoref{table:compare} summarizes new lower bounds for maximum SUSP
size in comparison with \cite{cksu05}.  The lower bounds of
\cite{cksu05} are from the constructions in their Propositions 3.1 and
3.8, which give families of strong USPs for even $k$ or $k$ divisible
by three.  For $k$'s which are not divisible by two or three, we
extrapolate their construction by adding a new column, this preserves
the SUSP property.  The upper bounds on $\omega$ in this table are
computed by plugging $s$ and $k$ into \autoref{lem:omega} and
optimizing over $m$.  For clarity we omit $\omega$'s that would be
larger than previous columns.  Our results in this table we produced
by running SP-BFS and other search algorithms which verify that the
final result is a strong USP.  Our bounds are tight for all $k \le 5$,
because of the exhaustive nature of SP-BFS, and constructively
improve the known lower bounds for $4 \le k \le 12$.

\begin{table}[t]

  \begin{centering}
    \hspace{-1.5ex}
    \begin{tabular}{ccrrrrrrrrrrrr}
      \toprule
      &&\multicolumn{12}{c}{$k$} \\ \cmidrule{3-14}
      
      & & \multicolumn{1}{c}{~1} & \multicolumn{1}{c}{~2} &
      \multicolumn{1}{c}{~3} & \multicolumn{1}{c}{~4} &
      \multicolumn{1}{c}{~~5~~} & \multicolumn{1}{c}{~6} &
      \multicolumn{1}{c}{~7} & \multicolumn{1}{c}{~8} &
      \multicolumn{1}{c}{~9} &\multicolumn{1}{c}{~10} &
      \multicolumn{1}{c}{~11} & \multicolumn{1}{c}{~12}\\ \hline
      
      \multirow{2}{*}{\cite{cksu05}} &$s \ge$ & \textbf{1} &
      \textbf{2} & \textbf{3} & 4 & 4 & 10 & 10 & 16 & 36 & 36 & 36 &
      136 \\
      
      &$\omega \le$ & 3.00 & 2.88 & 2.85 & 2.85 & & 2.80 & & & 2.74 &
      & & 2.70 \\
      
      \multirow{2}{*}{This work}&$s \ge$ & \textbf{1} & \textbf{2} &
      \textbf{3} & \textbf{5} & \textbf{8} & 14 & 21 & 30 & 42 & 64 &
      112 & 196\\
      
      &$\omega \le$ & ~3.00 & ~2.88 & ~2.85 & ~2.81 & ~2.78 & ~2.74 &
      ~2.73 & ~2.72 & ~2.72 & ~2.71 & ~2.68 & ~2.66 \\ \bottomrule
      
    \end{tabular}
  \end{centering}
  \medskip
  \caption{Comparison with \cite{cksu05} of lower bounds on the
    maximum of size of width-$k$ strong USPs and the upper bounds on
    $\omega$ they imply. Bold font indicates tight results for that
    $k$. \label{table:compare}}
\end{table}

\autoref{fig:examples} contains representative examples of
maximal-size strong USPs we found for $k \le 6$.
\begin{figure}[t]
  \begin{center}
  \includegraphics[width=.8\linewidth]{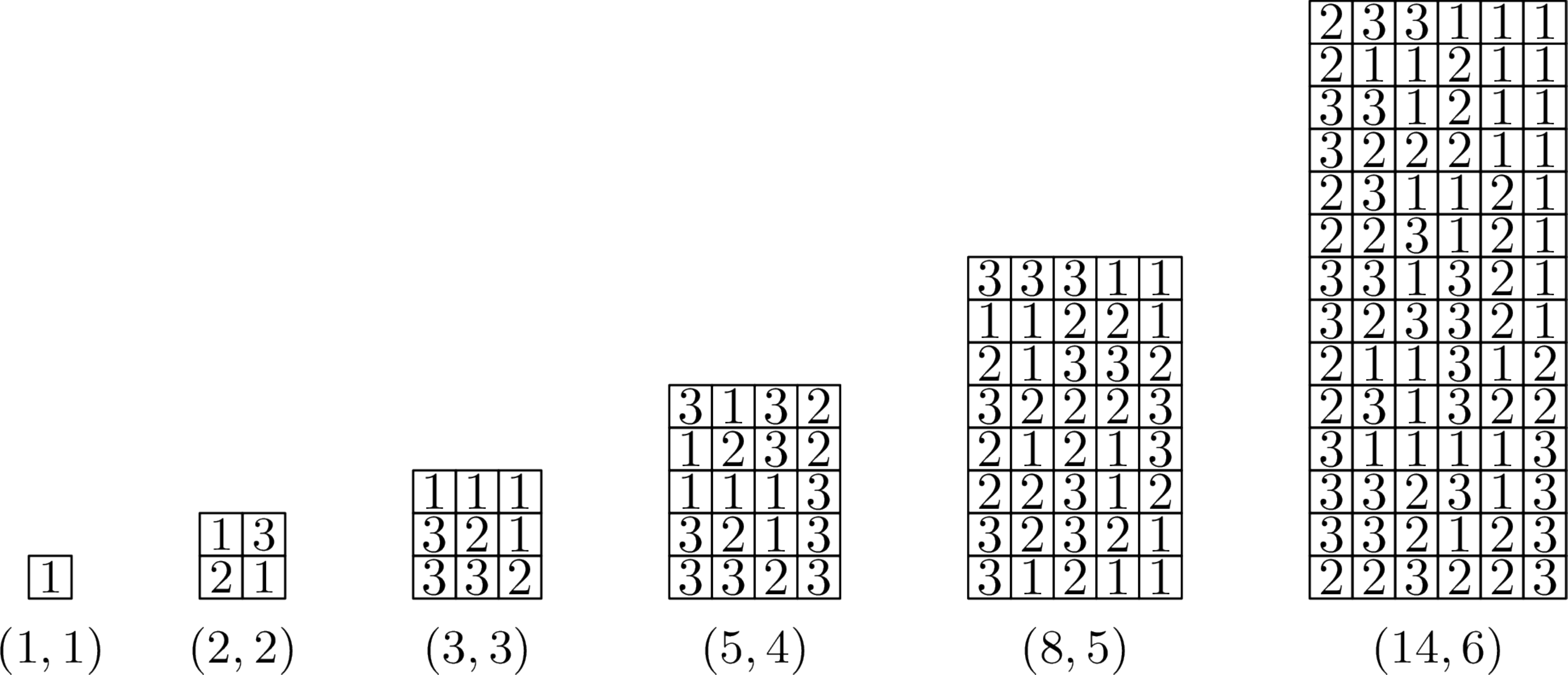}
  \caption{Representative maximal-size strong USPs found for width $k
    = 1, 2, \ldots, 6$.}
  \label{fig:examples}
  \end{center}
\end{figure}
The strong uniquely solvable \puzs{14}{6} we found represent the
greatest improvement in $\omega$ versus the construction of
\cite{cksu05} for small $k$.  Further, our puzzle for $k = 12$ is the
result of taking the Cartesian product of two copies of a strong
uniquely solvable \puzs{14}{6}.  Note that Proposition 3.8 of
\cite{cksu05} gives an infinite family of strong USPs that achieves
$\omega < 2.48$ as $k$ goes to infinity, which is stronger than our
results are directly able to achieve. 

\paragraph{New Upper Bounds.}
\autoref{table:upper-bounds} summarizes the results of evaluating the
bounds from \autoref{sec:upper-bounds} for puzzles of width $k \le
12$.  The calculations were routine except for the clique bound that
required constructing $G_k$, converting it into a mixed integer
program, and solving that program using Gurobi \cite{gurobi}.  This
was feasible on our test system up to $k = 11$. We also experimented
with calculating the upper bounds for the $3$-HypergraphClique bound,
but found it infeasible to compute for $k \ge 5$ and so have omitted
the results.  The final row of the table contains the best upper
bounds we achieved, including applying the downward-closure bound to
lift adjacent bounds at $k = 6$ and $k = 12$.  These upper bounds are
stronger than those immediately implied by \cite{cksu05}.

Observe that exhaustive search produced the best and tightest bounds,
and that the clique bound is considerably stronger than the unique
pieces, USP, and $\omega$ bounds.  The unique pieces bounds appears to be
stronger than the USP bound, but we know that that is an artifact of
the small value of $k$.  As $k$ increase, the USP bound will become
tighter than the unique pieces bound.
\begin{table}[t]
  \begin{center}
    \begin{tabular}{crrrrrrrrrrrr}
      \toprule
      & \multicolumn{12}{c}{$k$} \\ \cmidrule{2-13}
      
      Bound & \multicolumn{1}{c}{~1} & \multicolumn{1}{c}{~2} &
      \multicolumn{1}{c}{~3} & \multicolumn{1}{c}{~4} &
      \multicolumn{1}{c}{~~5~~} & \multicolumn{1}{c}{~6} &
      \multicolumn{1}{c}{~7} & \multicolumn{1}{c}{~8} &
      \multicolumn{1}{c}{~9} &\multicolumn{1}{c}{~10} &
      \multicolumn{1}{c}{~11} & \multicolumn{1}{c}{~12}\\ \hline
      
      $\omega$ & ~3 & ~7 & ~15 & ~31 & ~62 & ~120 & ~230 & ~438 & ~831
      & ~1,575 & ~2,890 & ~5,637\\

      Unique & 2 & 4 & 8 & 16 & 32 & 64 & 128 & 256 & 512 & 1,024 & 2,048 & 4,096 \\
      
      USP & 3 & 6 & 12 & 24 & 45 & 87 & 168 & 312 & 597 & 1,140 &
      2,112 & 4,023\\
      
      Clique & \textbf{1}& 3 & 5 & 9 & 17 & 30 & 55 & 105 & 186 & 348
      & 654 &\\
      
      Exhaustive& \textbf{1}& \textbf{2}& \textbf{3}& \textbf{5}&
      \textbf{8} &&&&&&&\\ \hline
      
      Best & \textbf{1}& \textbf{2}& \textbf{3}& \textbf{5}&
      \textbf{8} & 24 & 55 & 105 & 186 & 348 & 654 &
      1,962\\ \bottomrule
    \end{tabular}
  \end{center}
  \caption{Upper bounds on the size of SUSPs for widths $k \le 12$.
    Bold font indicates the bound is tight, and blanks indicate the
    calculation for this puzzle width was
    infeasible.  \label{table:upper-bounds}}
\end{table}
Based on the processing time we spent on $k=6$, we conjecture that $s
= 14$ is tight for $k = 6$ and that our lower bounds for $k > 6$ are
not.  Our results suggests there is considerable room for improvement
in the construction of strong USPs, and that it is possible that there
exist large puzzles for $k = 7, 8, 9$ that would beat \cite{cksu05}'s
constructions and perhaps come close to the Coppersmith-Winograd
refinements.  That said, it seems that new insights into the SUSP
search problem are required to proceed for $k > 6$.

\paragraph{Counting Strong USP.}
\autoref{table:susp-equiv} shows the number of strong USPs and
equivalence classes of SUSP exhaustively calculated using SP-BFS with
and without symmetric pruning.  Observe that the number of strong USPs
is many orders of magnitude more than the number of equivalence
classes of strong USPs, even for \susps{3}{3}.  Exhaustive search
became infeasible even with puzzle symmetry pruning for $k \ge 6$ as
the memory usage of \autoref{alg:sp-bfs} for storing the search
frontier exceeds the 128GB available on our test system.

\begin{table}[t]
  \begin{centering}
    \hspace{-6.5ex}
    \begin{tabular}{lcrrcrrcrrcrrcrrcrr}
      \toprule
      & \multicolumn{18}{c}{$k$} \\ \cmidrule{3-19}
      
      $s$ & & \multicolumn{2}{c}{1} &~~& \multicolumn{2}{c}{2} &~~&
      \multicolumn{2}{c}{3} &~~& \multicolumn{2}{c}{4} &~~&
      \multicolumn{2}{c}{5} &~~& \multicolumn{2}{c}{6}
      \\ \cmidrule{1-1} \cmidrule{3-4} \cmidrule{6-7} \cmidrule{9-10}
      \cmidrule{12-13} \cmidrule{15-16}\cmidrule{18-19}
      
      
      1 && \textbf{1} & 3 && \textbf{2} & 9 && \textbf{3} & 27 &&
      \textbf{4} & 81 && \textbf{5} & 243 && \textbf{7} & 729 \\
      
      2 && &&& \textbf{2} & ~24 && \textbf{9} & 408 && \textbf{33} &
      4,848 && \textbf{91} & 50,160 && \textbf{229} & 486,024 \\
      
      3 && &&& &&& \textbf{9} & ~1,800 && \textbf{240} & 182,304 &&
      \textbf{2,429} & 8,361,000 && \textbf{16,971} & ~291,347,280 \\
      
      4 && &&& &&& &&& \textbf{728} & ~2,445,120 && \textbf{59,149} &
      ~992,377,400 && \textbf{1,611,648} & ? \\
      
      5 && &&& &&& &&& \textbf{190} & 3,248,640 && \textbf{707,029} &
      ? && ? & ? \\
      
      6 && &&& &&& &&& &&& \textbf{2,337,715} & ? && ? & ? \\

      7 && &&& &&& &&& &&& \textbf{1,359,649} & ? && ? & ? \\

      8 && &&& &&& &&& &&& \textbf{89,196} & ? && ? & ? \\

      9 && &&& &&& &&& &&& &&& ? & ? \\ \bottomrule
      
    \end{tabular}
  \end{centering}
  \medskip
  \caption{Number of equivalence classes (bold face, left) versus
    total number of encoded SUSPs (normal face, right) by \puz{s}{k}
    dimensions. Computed using \autoref{alg:sp-bfs}. Empty cells
    indicate that the number of SUSPs and equivalence classes is zero.  ?'s
    indicate unknown values that were infeasible to compute. \label{table:susp-equiv}}
\end{table}

\subsection{Algorithm Performance}
\label{subsec:performance}

To measure the performance of our verification algorithms and
heuristics we ran them on 10,000 random puzzles at each point on a
sweep through parameter space for widths $k = 5 \ldots 12$ and sizes
$s = 1 \ldots 60$.  We chose to test performance via random sampling
because we do not have access to a large set of solved instances.
This domain coincides with the frontier of our search space, and we
tuned the parameters of the heuristics and algorithms in the hybrid
algorithm to perform well in this domain.  We did not deeply
investigate performance characteristics outside of this domain.  In
Figures~\ref{fig:verification-means}, \ref{fig:heuristic-success}, \&
\ref{fig:hybrid-time} we plot results, for brevity, that are
representative of the parameter space only for $k \in \set{6, 9}$.

\begin{figure}[t]
  \includegraphics[width=\linewidth]{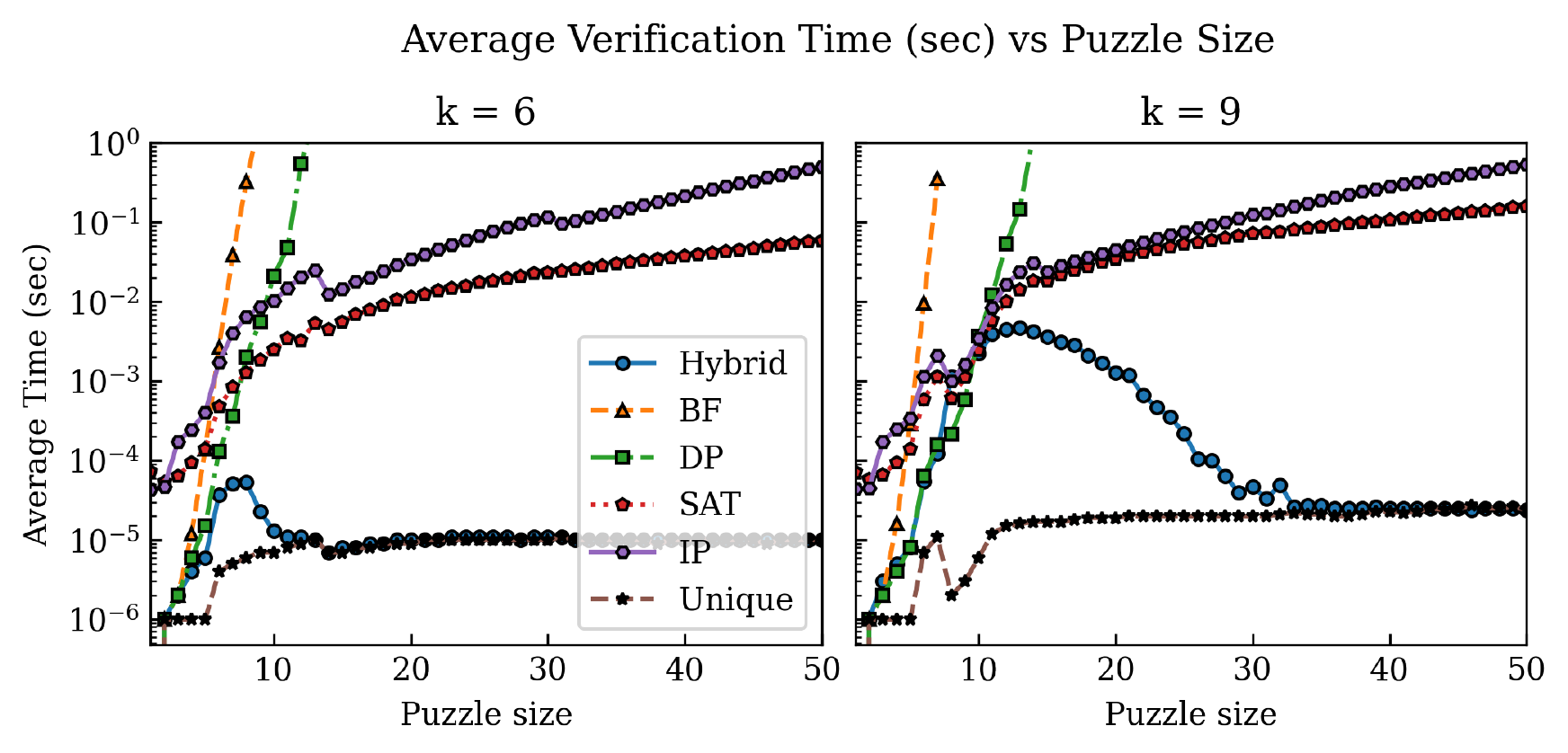}
  \caption{Log plots of the average running times for verifying 10,000
    random \puzs{s}{k} for each $s \in [50], k \in \set{6, 9}$. The
    plots describe the behavior of five verification algorithms brute
    force (BF), dynamic programming (DP), reduction to satisfiability
    (SAT), reduction to integer programming (IP), and our hybrid
    algorithm (Hybrid).  The running time of the unique pieces heuristic
    is also included. \label{fig:verification-means}}
\end{figure}

\paragraph{Running Time.}
\autoref{fig:verification-means} shows the average running times of
our verification algorithms in seconds.  The brute force and dynamic
programming algorithms perform poorly except for very small size, $s
\le 8$, and their curves loosely match the exponential-time bounds
we expect.  The plots for the two reduction-based algorithms (SAT and
IP) behave similarly to each other.  They are slower than brute force
and dynamic programming for small values of $s$, and their behavior
for large $s$ is quite a bit faster.  We speculate that the former is
due to the cost of constructing the reduced instance and overhead of
the third party tools.  Further observe that the SAT reduction handily
beats the IP reduction on large size for $k = 6$, but as $k$
increases, the gap decreases.  We also note that across the settings
of $k$ the IP reduction has effectively the same running time and is
independent of $k$.  This is likely because the size of the IP
instance depends only on $s$.  The hybrid algorithm generally performs
best or close to best at small values of $s$ and is clearly faster for
large values of $s$.  Notice that it matches the dynamic programming
algorithm closely for small values of $s$ and then diverges when the
reduction-based algorithms and heuristics are activated at
larger $s$.  Observe that the hybrid algorithm is effectively constant
time for large $s$, though the size for which this happens increases
as a function of $k$.  We expect this is because the density of strong
USPs decreases rapidly with $s$, and that the randomly selected
puzzles are likely far from satisfying \autoref{def:strong-USP} and,
hence, they are quickly rejected by the unique pieces heuristics.
Further evidence of this is that running time of the hybrid algorithm
converges to the running time of the unique pieces heuristic for large
$k$.

\begin{figure}[t]
  \includegraphics[width=\linewidth]{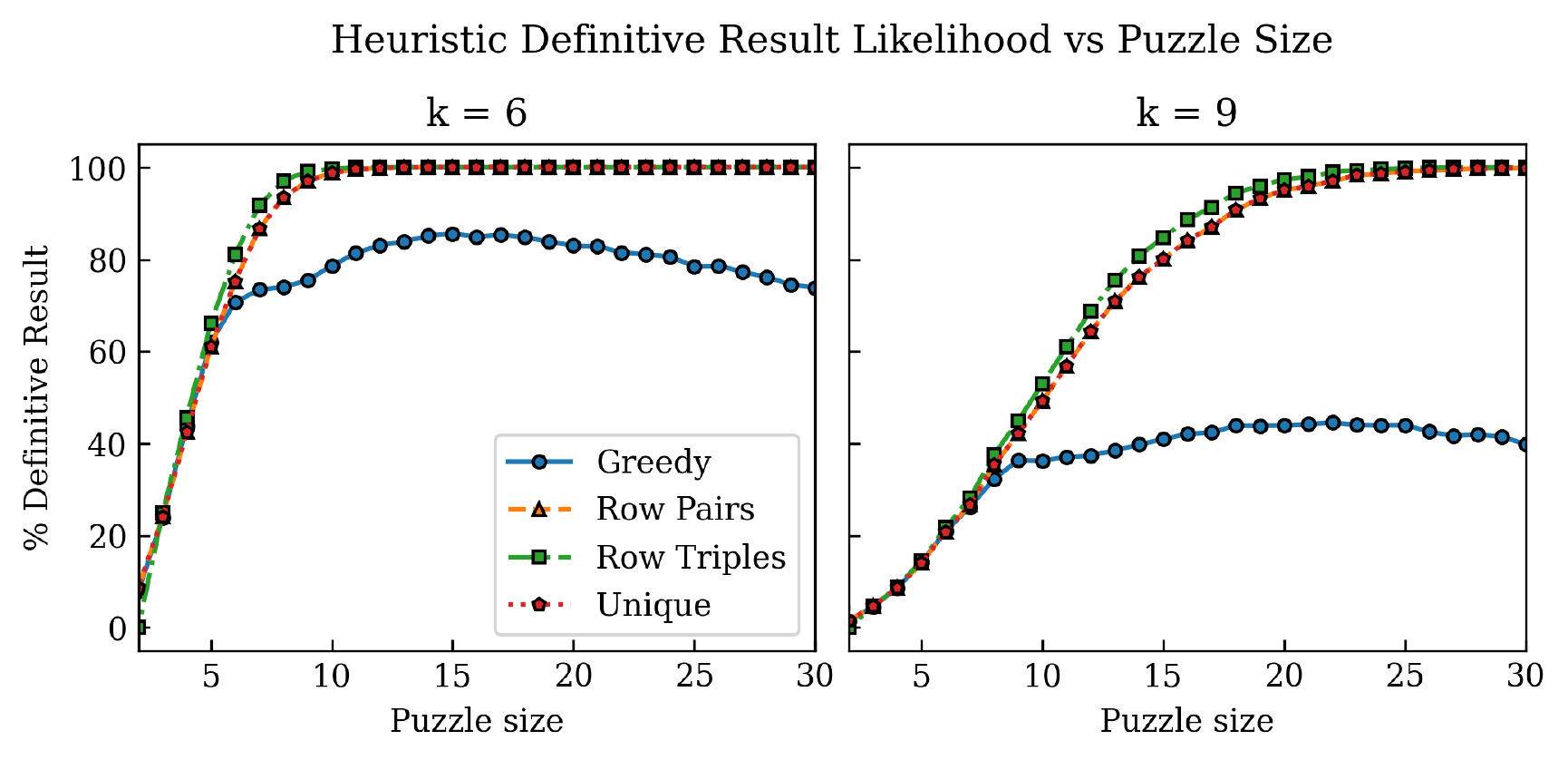}
  \caption{Plots of the likelihood that each of the heuristics produces a
    definitive results on 10,000 random \puzs{s}{k} for each size $s \in
    [50]$ and width $k \in \set{6, 9}$. Here ``row pairs'' is
    $\textsc{HeuristicDownwardClosed}(P, 2)$ and ``row triples'' is
    $\textsc{HeuristicDownwardClosed}(P, 3)$.  The row pairs points
    are plotted, but are hard to see, because the unique pieces points
    coincides with them.
    \label{fig:heuristic-success}}
\end{figure}

\paragraph{Heuristic Effectiveness.}
\autoref{fig:heuristic-success} shows the probability that each
individual heuristic distinguishes a random puzzle in our benchmark.
Observe that the distinguishing power of the downward closure
heuristic for $s' = 2$ and unique pieces heuristics coincide,
demonstrating experiment consistency with
\autoref{lem:heuristic-equiv}. Further, and for the same reason, the
downward closure heuristic for $s' = 3$ has at least as high a
distinguishing likelihood as the unique pieces heuristic.  In the
plots, these three heuristics achieve almost 100\% probability of
distinguishing random puzzles by size $s = 30$. The greedy heuristic
perform less well than the others and get substantially worse as $k$
increases.  We do not plot the running times of the heuristics here,
but they behave as expected by the earlier analysis.  As we noted
earlier, unique pieces is linear time in the size of the puzzle and
the fastest of the heuristics.  \autoref{fig:verification-means} shows
how the running time of the hybrid algorithm and unique pieces
converges as essentially all random puzzles of large size, which the
benchmark examined, are verified as non-SUSPs by this heuristic.

\begin{figure}[t]
  \includegraphics[width=\linewidth]{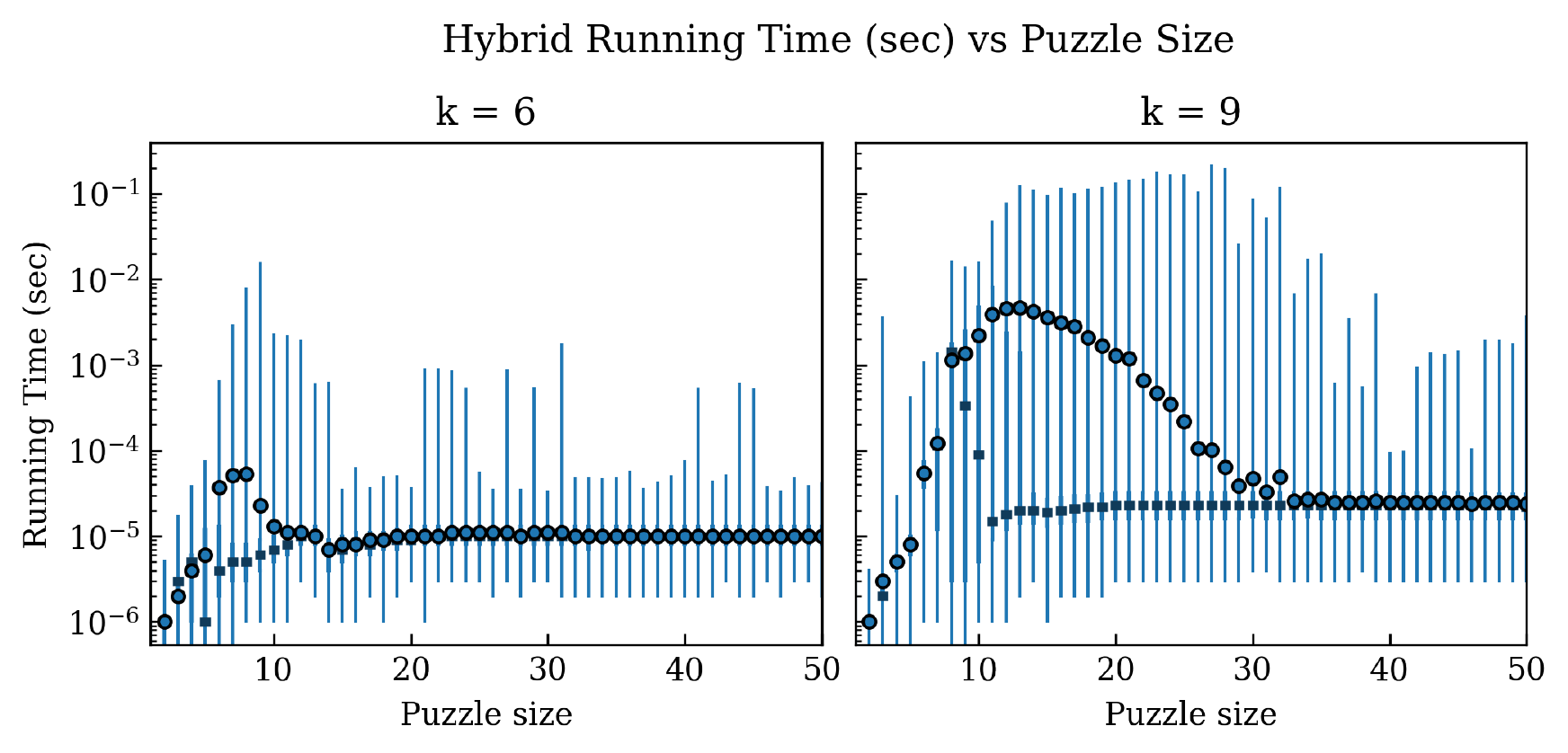}
  \caption{Log box plots of the distribution of the running times of
    the hybrid verification algorithm on 10,000 random \puzs{s}{k}
    for each $s \in [50], k \in \set{6, 9}$.  The blue circles
    denote the average running times of the hybrid algorithm.  The dark
    blue blocks indicates the median times. The thick vertical lines
    indicate the middle 50\% of times, and the thin vertical lines
    indicate the full range of running times at each $s$.
    \label{fig:hybrid-time}}
\end{figure}

\paragraph{Variation in Running Time.}
Finally, we look at the variation in the running times of the hybrid
algorithm in \autoref{fig:hybrid-time}.  For small $s$, the running
time distribution is far from a normal distribution--the average is
far above the median and middle 50\% of running times.  This effect
becomes even more pronounced as $k$ increases. However, we find that
as $s$ increases, the median running time converges with the median
running time of the unique pieces heuristic, and then for larger $s$,
the average running time converges as well.  This is a consequence of
the hybrid algorithm having to run the orders of magnitude slower
reduction-based algorithms when the fast heuristics fail to resolve the
instance.  Although not plotted here, we found that the range of the
distribution of running times for the SAT-based verifier was larger
than for the IP-based verifier, even though the IP-based verifier was
slower on average.

Overall, our hybrid verification algorithm performs reasonably well in
practice on random instances, despite reductions through
\NP{}-complete problems.

\subsection{Choice of SAT Solver}
\label{sec:choice-sat}

In the conference version of this article we examined only one SAT
solver for use in our implementation, MapleCOMSPS, a conflict-driven
solver that uses a learning rate branching heuristic, and that was a top
performer at the 2016 SAT Competition \cite{bhj17,lgpc16,ajx20}.  In
this article we create a set of benchmark satisfiability instances,
using the SUSP verification reduction on a variety of puzzles (recall
\autoref{subsec:sat}), and examined the performance of
35\footnote{There were 39 SAT solvers submitted to the main track.  We
  use the default build configuration for each submission.  We were
  unable to build three of them, and one that builds repeatedly
  crashed on all benchmarks without producing a result. We tested the
  remaining 35.} solvers submitted to the main track of the 2021 SAT
Competition \cite{SAT2021}.

We select benchmark instances consisting of \puz{s}{k} with sizes from
the set $$\set{(2, 2), (3, 3), (5, 4), (8, 5), (14, 6), (21, 7), (30,
  8), (42, 9)}.$$ We choose these sizes, because we want positive and
negative instances and these sizes represent the largest strong USPs
of each width we have been able to locate through search.  For each
size we created ten puzzles that are strong USPs and ten puzzles that
are not.  To create the ten non-SUSPs we randomly generated a puzzle
of that size and verified it was not a strong USP.  To create the ten
strong USPs we for each size we used the results of our search
algorithms.  Then we ran all of the puzzles through our SAT reduction
to create \texttt{.dimacs} files for each instance.  Note that the
SUSPs correspond to UNSAT instances and non-SUSPs correspond to SAT
instances.  In total there are 160 instances in this benchmark.  We
then ran each of the 35 solvers on each the 160 instance files and
check the output of each run against the expected result. For each
trial, we record the user CPU time reported by the Linux \texttt{time}
command, or a timeout if the program runs more than 5000 seconds
without halting (mimicking the rules of the real SAT competition).
For comparison, we also run the MapleCOMSPS solver (from earlier
version of this article), our MIP-based verifier (recall
\autoref{subsec:mip}) and our final hybrid verification algorithm on
the same set of benchmark puzzles.

To compare the results of each solver we calculate the maximum time to
complete each instance across all of the runs, which is 5000 seconds if a
run timed out, and then divide by that maximum time to normalize all of the
running times to the interval $[0,1]$.  We calculate a benchmark score
for each solver by summing their relative running times across all
instances.  \autoref{table:solvers-perf} contains the benchmark scores
for each solver.

\begin{table}[t]
  \begin{center}
    \begin{tabular}{lrrrr}
      \toprule
      Solver & ~~~~SAT & ~~UNSAT & ~~~~~Total & ~~Timeouts\\ \hline
      cadical-hack-gb & 17.51 & 15.97 & 33.48 & 15 \\ 
      cadical-less-UP & 19.81 & 16.14 & 35.95 & 15 \\ 
      cadical-PriPro & 19.49 & 15.62 & 35.11 & 15 \\ 
      cadical-PriPro\_no\_bin & 16.55 & 15.73 & 32.28 & 15 \\ 
      cadical-rp & 19.08 & 15.78 & 34.85 & 15 \\ 
      cadical-sc2021 & 18.82 & 16.80 & 35.62 & 16 \\ 
      Cadical\_SCAVEL01 & 33.49 & 16.73 & 50.23 & 15 \\ 
      Cadical\_SCAVEL02 & 40.97 & 27.28 & 68.26 & 15 \\ 
      cleanmaple & 30.44 & 18.93 & 49.37 & 17 \\
      CleanMaple\_PriPro & 30.70 & 20.18 & 50.87 & 18 \\ 
      hCaD & 19.70 & 16.52 & 36.22 & 16 \\ 
      hKis & 13.15 & 17.30 & 30.45 & 16 \\ 
      kissat\_bonus & 13.04 & 16.59 & 29.63 & 15 \\ 
      kissat\_cf & 12.06 & 16.19 & 28.26 & 14 \\ 
      kissat\_gb & 12.52 & 17.27 & 29.79 & 17 \\ 
      kissat-MAB & 15.28 & 16.07 & 31.36 & 15 \\ 
      kissat-sat\_crvr\_gb & 13.37 & 16.64 & 30.01 & 16 \\ 
      kissat-sc2021 & 12.32 & 16.08 & 28.40 & 14 \\ 
      kissat-sc2021-sat & \textbf{12.02} & 16.06 & \textbf{28.08} & 14 \\ 
      kissat-sc2021-sweep & 12.82 & 16.24 & 29.07 & 16 \\ 
      lstech\_maple & 15.13 & 14.83 & 29.96 & 12 \\ 
      Maple\_MBDR\_BJL6\_Tier2 & 19.46 & 16.02 & 35.47 & 14 \\ 
      Maple\_MBDR\_BJL7\_Local & 19.98 & 15.49 & 35.47 & 13 \\ 
      Maple\_MBDR\_Cent\_PERM\_10K & 25.20 & 15.96 & 41.16 & \textbf{12} \\ 
      Maple\_MBDR\_Cent\_PERM\_75K & 25.07 & 16.00 & 41.06 & \textbf{12} \\ 
      Maple\_simp21 & 12.53 & 16.72 & 29.26 & 15 \\ 
      MapleSSV & 15.56 & 16.68 & 32.24 & 16 \\
      parafrost-nomdm-sc2021 & 18.11 & 15.56 & 33.67 & 14 \\ 
      parafrost-sc2021 & 24.15 & 15.61 & 39.76 & 14 \\
      Relaxed\_LCFTP & 12.80 & 17.55 & 30.35 & 16 \\
      Relaxed\_LCFTP\_V2 & 13.97 & 16.17 & 30.14 & \textbf{12} \\ 
      Relaxed\_LCMDCBDL\_BLB & 15.38 & 15.95 & 31.33 & 14 \\ 
      Relaxed\_LCMDCBDL\_SCAVEL01 & 13.95 & 16.08 & 30.03 & 15 \\ 
      Relaxed\_LCMDCBDL\_SCAVEL02 & 25.45 & 79.43 & 104.88 & 17 \\
      slime & 17.26 & \textbf{14.73} & 31.99 & 13 \\ \hline 
      MapleCOMSPS & 12.98 & 17.42 & 30.40 & 16 \\ 
      Gurobi & 30.20 & 0.00 & 30.20 & 0 \\ 
      Hybrid & 0.00 & 0.01 & 0.01 & 0 \\ \bottomrule
    \end{tabular}
  \end{center}
  \caption{Scores for solvers on our SUSP verification benchmark.  The
    SAT and UNSAT score are out of 80, the total score and timeouts
    are out of 160.  Lower scores are better and minimum values for
    each SAT solver are bold in each column. The top part of the
    table includes the SAT solvers we tested from the 2021 SAT
    Competition \cite{SAT2021}. \label{table:solvers-perf}}  
\end{table}

MapleCOMSPS, the solver we used in the conference version of this
article, performs similarly to the best scoring solvers from the 2021
competition.  The recorded timeouts across all solvers come almost
exclusively from the UNSAT instances derived from \susps{30}{8} and
\susps{42}{9}.  The Gurobi-based verifier performs substantially worse
than the best performing satisfiability solvers on SAT instances
(non-SUSPs), but dramatically better on UNSAT instances (SUSPs). 

\begin{figure}[t]
  \includegraphics[width=\linewidth]{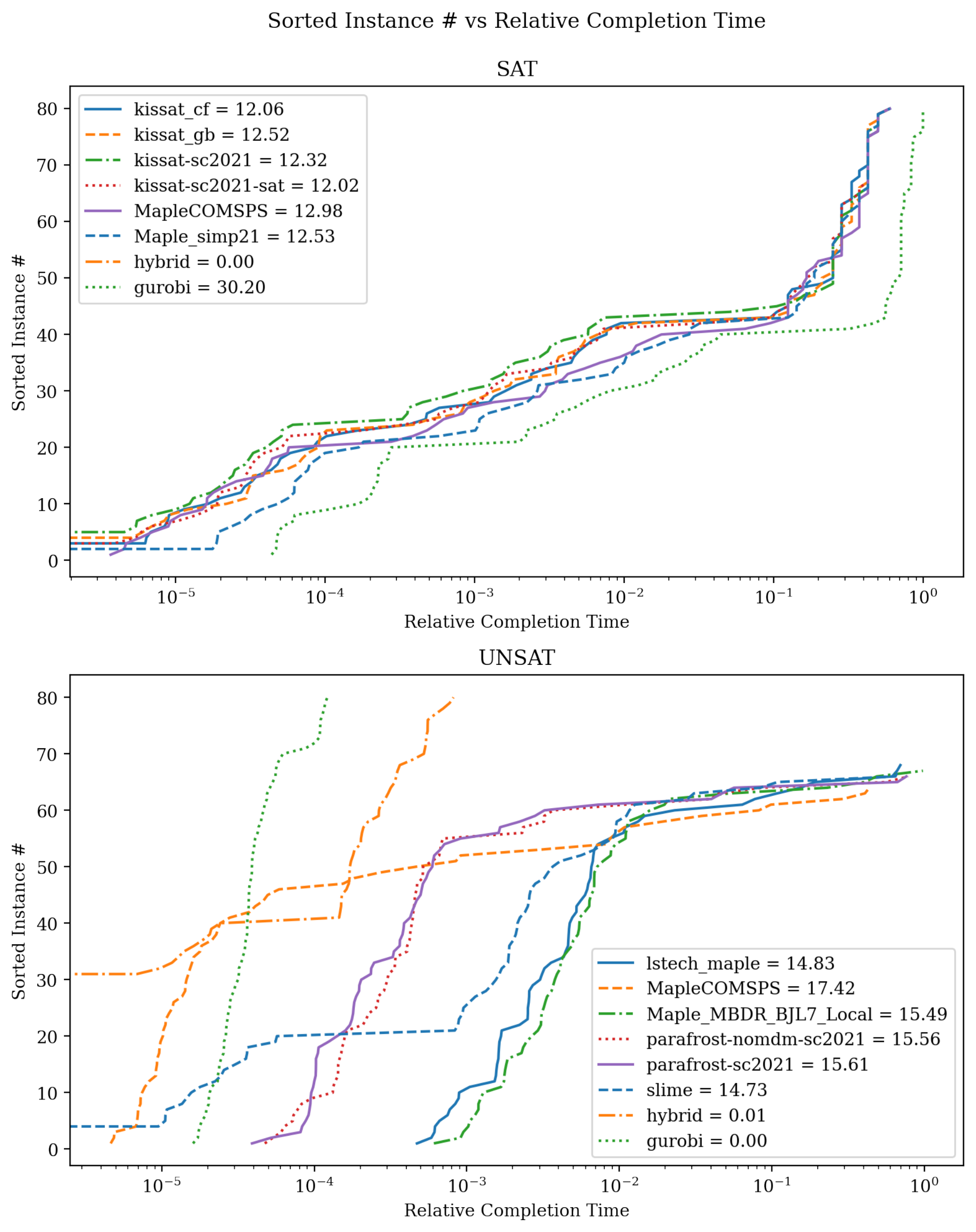}
  \caption{Plots of the sorted relative completion times for SAT and
    UNSAT instances on the five best-scoring solvers for that instance
    type. \label{fig:solvers-perf}}
\end{figure}

\autoref{fig:solvers-perf} shows the performance of the
Gurobi-based verifier against the five solvers with the best SAT
scores.  In this plot the instance completion times for each solver
are sorted in increasing order, so that curves further to the left are
better.  If this were not a log-plot, the area to the left of the
curve would be proportional to the benchmark scores from
\autoref{table:solvers-perf}.  Observe that for SAT instances, the SAT
solvers, including MapleCOMSPS, follow similar trajectories.  Gurobi
performs an order of magnitude worse across all SAT instances.  The
hybrid algorithm, although plotted, is not visible because of how
effective the heuristics are at identifying random SAT (non-SUSP)
instances.  For UNSAT instances, the situation is different.  Gurobi
performs relatively more slowly for small, easier instances, but
substantially better than the SAT solvers for larger, harder
instances.  The performance of the solvers on easier UNSAT instances
is more varied than the corresponding case for SAT instances, but this
does not translate into much of a difference in benchmark score because
the magnitude of the relative completion time is low.

For UNSAT instances, the benchmark score is dominated by the number of
timeouts, each of which effectively adds one to the score.  Indeed,
the plots for the SAT solver cut off between instance numbers 60 to
70, because the remaining instances cause timeouts.  Finally, notice
that hybrid algorithm out performs the others for small UNSAT
instances -- these are instances of the sort where the brute force and
bi-directional search algorithms are applied.  For larger instances
the hybrid algorithm tracks an order of magnitude worse than the
Gurobi-based verifier.  This is because our algorithm is tuned to
encounter many more SAT instances (non-SUSPs) than UNSAT instances
(SUSPs).  Further, because the one-sided heuristics rule out SAT
instances quickly in practice, on UNSAT instances the hybrid algorithm
runs these heuristics first, but then has to fall back on the
Gurobi-based verifier causing some overhead.

Ultimately, the results of these benchmarking experiments suggest that
there is not a substantial difference between using the 2016
MapleCOMSPS and the best solvers from the 2021 competition. Even so,
we choose \texttt{kissat-sc20221-sat} as the default solver in our
implementation, because it performed the best on our benchmark of SAT
instances.  Using our current approach, Gurobi is essential to the
feasible verification of SUSPs.

The benchmark instances and puzzles, and the entirety of the raw
timing data can be found in our
repository\footnote{\url{https://bitbucket.org/paraphase/matmult/src/main/data_set/}}.

\section{Conclusions}
\label{sec:conclusion}

We initiated the first study of the verification of strong USPs and
developed practical software for both verifying and searching for
them.  We give tight results on the maximum size of width-$k$ strong
USPs for $k \le 5$ and improved upper and lower bounds on maximum
strong-USP size for $k \le 12$.  We prove a number of properties of
strong USPs related the verification and search.  We also produce a
new set of benchmark instances for SAT solvers.

Although our results do not produce a new upper bound on the running
time of matrix multiplication, they demonstrate there is promise in
this approach.  There are a number of open questions.  Is strong-USP
verification \coNP{}-complete?  What is the maximum strong-USP
capacity?  Is there a way to bridge the apparent gap between the
values of $\omega$ implied by single SUSPs and the values implied by
infinite families of SUSPs?  What are tight bounds on maximum-size
strong USPs for $k \ge 6$ and do these bound lead to asymptotically
faster algorithms for matrix multiplication?

The main bottleneck in our work is the size of the search space---new
insights seem to be required to substantially reduce it.  Are there
subclasses of strong USPs that can be more effectively searched?  Are
there search strategies that would be more effective on this space?

\section*{Acknowledgments}
The authors thank the anonymous reviewers for their detailed and
thoughtful suggestions for improving this work.

The second and third authors thank Union College for the Undergraduate
Summer Research Fellowships funding their work.  The first author
thanks the many undergraduate students that have contributed in some
form to this project over the years, including: Jonathan Kimber,
Akriti Dhasmana, Jingyu Yao, Kyle Doney, Quoc An, Harper
Lyon, Zachary Dubinsky, Talha Mushtaq, Jing Chin, Diep Vu, Hung
Duong, Vu Le, Siddhant Deka, Baibhav Barwal, Aavasna Rupakheti.

\bibliographystyle{splncs04}
\bibliography{references}

\appendix

\end{document}